\theoremstyle{plain}
\newtheorem{theorem}{Theorem}[section]
\newtheorem{cor}[theorem]{Corollary}
\newtheorem{lemma}[theorem]{Lemma}
\newtheorem{proposition}[theorem]{Proposition}
\theoremstyle{definition}
\numberwithin{equation}{section} \setcounter{page}{1}
\renewcommand{\paragraph}[1]{{\bf #1.}}
\definecolor{myred}{rgb}{0.8,0,0}  
\definecolor{mygreen}{rgb}{0,0.7,0}  
\newcommand{\neu}{\color{blue}}
\renewcommand{\neu}{}
\newcommand{\neuneu}{\color{mygreen}}
\renewcommand{\neuneu}{}
\newcommand{\mymarginpar}[1]{ \marginpar{{\tiny #1}}}
\renewcommand{\mymarginpar}[1]{} 
{\vskip\baselineskip\noindent\textbf{Proof of {#1}:}}%
{\hspace*{.1pt}\hspace*{\fill}\BOX\vskip\baselineskip}
\def \R{\mathbb{R}}               
\def \N{\mathbb{N}}               
\def \1{{\bf 1}}                
\def \0{{\bf 0}}
\def\eps{\varepsilon}
\def\revlevel{\overline{\mu}}
\def\revspeed{\kappa}
\def\driftinitial{\overline{\filter}_0}
\def\filterinitial{\filter_0}
\def\covinitial{\overline{\variance}_0}
\def\condcovinitial{\variance_0}
\def\vol_R{\sigma_R}
\def\filter{m}
\def\variance{q}
\def\revspeed{\kappa}
\def\HF{F}
\def\HR{R}
\def\HD{J}
\def\HC{Z}
\def\nAktien{d}
\def\nWienerRendite{d_1}
\def\nWienerDrift{d_2}
\def\nWienerExperten{d_3}
\def\varianceexp{\Gamma}
\def\reliable{\Gamma}
\def\volR{\sigma_R}
\def\voldrift{\sigma_{\mu}}
\def\filter{m}
\def\variance{q}
\def\gammaq{\gamma}
\def\drift{\mu}
\def\komppoi{\widetilde{N}^{\lambda}}
\def\Mpro{M}
\def\Qpro{Q}
\def \R{\mathbb{R}}               
\def \N{\mathbb{N}}               
\def \1{{\bf 1}}                
\def \0{{\bf 0}}
\def\eps{\varepsilon}
\def\vol_R{\sigma_R}
\newcommand\norm[1]{\left\lVert#1\right\rVert}
\def \alphaQD{\alpha^{\HD,\lambda}}
\def \alphaQC{\alpha^{\HC,\lambda}}
\def \gammah{\psi}
\def \alphaQCbar{\overline{\alpha}^{\lambda}}
\def \myzeta{J}
\def\uppbound{K}
\def\uppboundC{\uppbound^\HC}
\def\uppboundD{\uppbound^\HD}
\def\mynu{r}
\def\varexp{\Gamma}
\def \trace{\operatorname{tr}}
\def \Erw{\mathbb{E}}
\def \diag{\operatorname{diag}}
\begin{document}
	
  \articletype{~}

  \author[1]{Abdelali Gabih}
  \author[2]{Hakam Kondakji}
  \author[3]{Ralf Wunderlich}
  \runningauthor{Gabih, Kondakji and Wunderlich}
  \affil[1]{
    Laboratoire Informatique et Mathematiques
    et leurs Applications (LIMA), Faculty of Science, Chouaib Doukkali
    University, El Jadida 24000,  Morocco,    \texttt{a.gabih@uca.ma}
    }
  \affil[2]{Institute of Computer Science, Brandenburg University of Technology \newline Cottbus - Senftenberg, Postbox 101344, 03013 Cottbus, Germany, \texttt{hakam.kondakji@b-tu.de }}
  \affil[3]{Institute of Mathematics, Brandenburg University of Technology \newline Cottbus - Senftenberg, Postbox 101344, 03013 Cottbus, Germany, \texttt{ralf.wunderlich@b-tu.de }\newline
  The authors thank Dorothee Westphal and Jörn Sass  (TU Kaiserslautern) for valuable discussions that improved this paper.    }

  \title{ Asymptotic Filter Behavior for High-Frequency Expert Opinions in a Market
  with  Gaussian    Drift
}
  \runningtitle{ Asymptotic Filter Behavior for High-Frequency Expert Opinions}
  \subtitle{~}
  \abstract{
                This paper investigates a financial market where stock returns depend on a hidden
        Gaussian mean reverting drift process. Information on the drift is obtained from returns and
        expert opinions in the form of noisy signals about the current state of the drift arriving at the jump times of a homogeneous Poisson process.  Drift estimates are based on Kalman filter techniques and described by the conditional mean and  covariance matrix of the drift given the observations.
        We study the  filter asymptotics for increasing arrival intensity of expert opinions and prove that the conditional mean is a consistent  drift estimator, it  converges in the mean-square sense to the hidden drift. Thus, in the limit as the arrival intensity goes to infinity investors have full information about the drift.
        }
  \classification[MSC]{ Primary 93E11;  Secondary 60F17, 60G35 }
\keywords{Kalman filter, Ornstein-Uhlenbeck process, Partial
information, Expert  opinions,  Black-Litterman model, Bayesian
updating
    }

  \journalname{}
\journalyear{~}
  \journalvolume{}
  \journalissue{}
  \startpage{1}

  \maketitle


\section{Introduction}
In this paper we investigate  a hidden Gaussian model (HGM) for a
financial market where asset prices follow a diffusion process with
an unobservable Gaussian mean reverting drift  modelled by an
Ornstein-Uhlenbeck process. Such models are widely used in the study
of portfolio optimization problems under partial information on the
drift. There are two popular model classes for the drift, the {\neuneu above-mentioned}
HGM and {\neuneu the} hidden Markov model (HMM)  {\neuneu in which} the drift process is a
continuous-time Markov chain. For utility maximization problems
under HGM we refer to Lakner \cite{Lakner (1998)} and  Brendle
\cite{Brendle2006} while HMMs are used in Rieder and Bäuerle
\cite{Rieder_Baeuerle2005}, Sass and Haussmann \cite{Sass and
Haussmann (2004)}. Both models are studied in Putsch\"ogl and Sass
\cite{Putschoegl and Sass (2008)}. A generalization of these
approaches and further references can be found  in Björk et al.
\cite{Bjoerk et al (2010)}.

For solving  portfolio problems under partial information the drift
has to be estimated from  observable quantities such as
stock returns.
For the above two models, HGM and HMM, the conditional distribution
of the drift process given the return observations  can be described
completely by finite-dimensional filter processes. This allows for
efficient solutions to portfolio problems including the computation
of an optimal policy.  For HGM and HMM  finite-dimensional filters
are known as the Kalman and Wonham filters, respectively, see e.g.\
Elliott, Aggoun and Moore \cite{Elliott et al. (1994)}, Liptser and
Shiryaev \cite{Liptser-Shiryaev}.

It is well known that the drift of a diffusion process is
particularly hard to estimate. Even the estimation of a constant
drift would require empirical data over an extremely large time
horizon, see Rogers \cite[Chapter 4.2]{Rogers (2013)}. Therefore, in
practice filters computed from historical price observations lead to
drift estimates of quite poor precision since drifts  tend to
fluctuate randomly over time  and drift effects are overshadowed by
volatility. At the same time optimal investment strategies in
dynamic  portfolio optimization depend crucially on the drift of the
underlying asset price process. For these reasons, practitioners
also {\neuneu utilize} external sources of information such as news,
company reports, ratings or their own intuitive views on the future
asset performance for the construction of  optimal portfolio
strategies. These outside sources of information  are called
\textit{expert opinions}. The idea goes back to the celebrated
Black-Litterman model which is an extension  of the classical
one-period Markowitz model, see Black and  Litterman
\cite{Black_Litterman (1992)}. It uses  Bayesian updating to improve
drift estimates.

Contrary to the classical static one-period model, we  consider a
 continuous-time model  for asset prices  where
additional information {\neuneu in the form of} expert opinions arrives repeatedly over time. Davis and Lleo \cite{Davis and Lleo (2018)} termed
that approach  ``Black-Litterman in Continuous-Time'' (BLCT).  {\neuneu The} first
papers addressing  BLCT  are   Frey et al. (2012) \cite{Frey et al.
(2012)} and their follow-up paper \cite{Frey-Wunderlich-2014}. They
consider an HMM for the drift and expert opinions arriving at the
jump times of a Poisson process and study the maximization of
expected power utility of terminal wealth.  An HGM and expert
opinions arriving at fixed and known times have been   investigated
in Gabih et al. \cite{Gabih et al (2014)} for a market with only one
risky stock, and generalized in Sass et al. \cite{Sass et al (2017)}
for markets with multiple risky stocks. Here, the authors consider
maximization of logarithmic utility. Davis and Lleo \cite{Davis and
Lleo (2013), Davis and Lleo (2018)} consider BLCT for power utility
maximization under  an HGM and expert opinions arriving continuously
{\neuneu over} time. This allows for quite explicit solutions for the portfolio
optimization problem.  In \cite{Davis and Lleo (2018)}, the authors
also focus on the calibration of the model for  expert opinions
to real-world data.

In a recent paper Sass et al. \cite{Sass et al (2018)} consider an
HGM with expert opinions both  at fixed  as well as random
information dates  and investigate the asymptotic behavior of the
filter for increasing  arrival frequency of the expert opinions.
They assume that  a higher frequency of expert opinions is only
available at the cost of accuracy. In particular,  the variance of
expert opinions grows linearly with the  arrival frequency. This
assumption reflects that it is not possible for investors to gain
{\neuneu unlimited  information in a finite} time interval.  Furthermore,
it allows to find a certain asymptotic behavior that yields  reasonable filter approximations
 for investors observing {\neuneu  discrete-time expert opinions arriving with a fixed and sufficiently large intensity}.
The authors derive limit theorems which state that the information obtained from
observing high-frequency discrete-time expert opinions is
asymptotically the same as that from observing a certain diffusion
process {\neuneu that has} the same drift as the return process. The latter
process can be interpreted as a {continuous-time expert} which
permanently delivers noisy information about the drift. These
so-called diffusion approximations show how the  BLCT model of Davis
and Lleo \cite{Davis and Lleo (2013), Davis and Lleo (2018)}  who
work with continuous-time expert opinions can be obtained as a limit
of BLCT models with discrete-time experts.

The present paper can be considered as a companion paper to the
above mentioned work of Sass et al. \cite{Sass et al (2018)}.
However, contrary to \cite{Sass et al (2018)} we assume that the
expert's reliability expressed by its variance remains  {\neu bounded}
when the arrival intensity increases. {\neu For the sake of simplicity we restrict to the case of a constant variance.} This leads to a different
asymptotic regime corresponding to the Law of Large Numbers while
the results in \cite{Sass et al (2018)} are  in the sense of
Functional Central Limit Theorems.

When the arrival intensity increases,  the investor receives more and more  noisy signals about the
current state of the drift of the same precision. {\neuneu It is then} expected
that in the limit the drift estimate is perfectly accurate and
equals the actual drift, i.e., the investor has full information
about the  drift. While this statistical consistency of the
estimator seems to be intuitively clear a rigorous proof is an open
issue and will be addressed in this paper. Gabih et al. \cite{Gabih
et al (2014)} and  Sass et al. \cite{Sass et al (2017)} provide such
 proof only for the case of fixed and known information dates.
However,  their results and  methods {\neuneu cannot} be applied to the
present model with random information dates.  Note that also the
methods for the proof of the diffusion limits in \cite{Sass et al
(2018)} do not carry over to the present case of fixed expert's
reliability. To the best of our knowledge the techniques for proving
convergence constitute a new contribution to the literature.
Compared to  \cite{Gabih et al (2014)} and   \cite{Sass et al
(2017)} we do not only give a rigorous convergence proof but  we are
also able to determine the rate of convergence and give 
explicit bounds for  the estimation error.

In this paper we concentrate on the asymptotic properties of drift
estimates which are based on Kalman filter techniques and described
by the conditional mean and  covariance matrix of the drift given the
observations. We show that for increasing arrival intensity of
expert opinions, the  expectation of the  conditional covariance goes to zero.
This implies that  the conditional mean is a consistent  drift
estimator, it  converges  to the hidden
drift in the mean-square sense. We expect that these convergence results carry over to the
value functions of portfolio optimization problems but do not
include these studies in this paper. For the maximization of
expected logarithmic utility, the convergence of value functions
 has {\neuneu already} been proven in Sass et al.~\cite{Sass et al (2018)}. The
case of power utility will be addressed   in our  follow-up paper
\cite{Gabih et al (2018-2)}.

\medskip
The paper is organized as follows: In Section \ref{market_model} we
introduce the model for our financial market including  expert
opinions and define information regimes for investors with different
sources of information. For each of those information regimes, we
state  the dynamics of the corresponding
conditional mean and conditional covariance process in Section \ref{Filtering}. Section
\ref{asymptotic_filter} contains our main contributions  and studies the asymptotic filter behavior  for increasing arrival intensity of discrete-time expert opinions.
First, Lemma \ref{alpha_Properties}  gives an estimate for the drift
term in the semimartingale representation  of the conditional
covariance process. Based on this estimate, Theorem
\ref{Convergence_trace} shows  that {\neuneu as the arrival intensity increases}
the expectation of the  conditional covariance goes to zero. As a
consequence, Theorem \ref{Convergence_filter} states  the mean-square
convergence of the conditional mean  to the hidden drift. 
In Section \ref{asymptotic_filter_cont}, we study a related problem  for continuous-time expert opinions {\neuneu that arises in the case of} diffusion approximations of  discrete-time expert opinions.
Section \ref{numerics}  illustrates the convergence results by some
numerical experiments. In Appendix~\ref{appendix} we collect some
auxiliary results and technical  proofs needed for our main
theorems.

\medskip
\paragraph{Notation}  Throughout this paper, we use the notation $I_d$ for the identity matrix in $\R^{d\times d}$. For a symmetric and positive-semidefinite matrix $A\in\R^{d\times d}$ we call a symmetric and positive-semidefinite matrix $B\in\R^{d\times d}$ the \emph{square root} of $A$ if $B^2=A$. The square root is unique and will be denoted by $A^{\frac{1}{2}}$.

For a vector $X$ we denote by $\norm{X}$ the Euclidean norm.  For a square
matrix  $A$ we denote by 
 $\norm{A}$ a generic matrix norm, by $\norm{A}_F=\sqrt{\sum_{i,j}(A^{ij})^2}$  the
 Frobenius norm and by  $\trace(A)=\sum_i A^{ii}$
the trace of $A$.

\section{Financial Market}
\label{market_model}
\subsection{Price Dynamics}
 The  setting is based on  Gabih et
al.~\cite{Gabih et al (2014)} and Sass et al.~\cite{Sass et al
(2017), Sass et al (2018)}. For a  fixed date $T>0$ representing the
investment horizon, we work on a filtered probability space
$(\Omega,\mathcal{G},\mathbb{G},P)$, with filtration
$\mathbb{G}=(\mathcal {G}_t)_{t \in [0,T]}$ satisfying the usual
conditions. All processes are assumed to be $\mathbb{G}$-adapted.

 We consider a market model  for one risk-free bond with
 constant risk-free interest rate 
and $\nAktien$ risky securities whose return process
$R=(R^{1},\ldots,R^{\nAktien})$ is defined by
\begin{align}
dR_t=\mu_t\; dt+\volR\; dW^{R}_t, \label{ReturnPro}
\end{align}
for a given $\nWienerRendite$-dimensional $\mathbb{G}$-adapted
Brownian motion $W^{\HR}$ with $\nWienerRendite\ge d$. The  constant volatility matrix $\volR\in\mathbb
R^{\nAktien\times\nWienerRendite}$ is  assumed to be   such that $\Sigma_{R}:=\volR\volR^{\top}$ is positive definite.
In this setting the price process $S=(S^1,\ldots,S^{\nAktien})$ of
the risky securities reads as
\begin{align}
dS_t&=diag(S_t)\, dR_t. \label{stockmodel}
\end{align}
Note that we can write
\begin{align}
\label{Aktienpreis_2}
\log S_t^{i}&=\log s_0^{i}+ \int\limits_0^t \drift_s^{i}ds +\sum\limits_{j=1}^{\nWienerRendite}\Big( \sigma_R^{ij}W_t^{R,j}-\frac{1}{2} (\sigma_R^{ij})^2 t\Big) \nonumber\\
&=\log s_0^{i} +
R_t^{i}-\frac{1}{2}\sum\limits_{j=1}^{\nWienerRendite}
(\sigma_R^{ij})^2 t ,\quad i=1,\ldots,\nAktien.
\end{align}
So we have the equality $\mathbb{G}^R = \mathbb{G}^{\log S} =
\mathbb{G}^S$, where  for a generic process $X$ we denote by $\mathbb{G}^X$ the filtration generated by $X$. This is useful since it allows to work with $R$
instead of $S$ in the filtering part.

 The dynamics of the drift process $\mu=(\mu_t)_{t\in[0,T]}$ in \eqref{ReturnPro}
 are
given by the stochastic differential equation (SDE)
\begin{eqnarray}
\label{drift} d\mu_t=\revspeed(\revlevel-\mu_t) dt+\voldrift
dW^{\mu}_t,
\end{eqnarray}
where $\revspeed\in\mathbb R^{\nAktien\times\nAktien}$,
$\voldrift\in\mathbb R^{\nAktien\times\nWienerDrift}$ and
$\revlevel\in\mathbb R^{\nAktien} $ are constants such that the
matrices $\revspeed$ and $\Sigma_{\mu}:=\voldrift\voldrift^{\top}$
are positive definite, and $W^{\mu}$ is a
$\nWienerDrift$-dimensional Brownian motion independent of $W^{R}$ with  $\nWienerDrift\ge d$.
Here, $\revlevel$ is the mean-reversion level, $\revspeed$ the
mean-reversion speed and $\voldrift$ describes the volatility of
$\mu$. The initial value $\drift_0$ is assumed to be a normally
distributed random variable independent of $W^{\mu}$ and $W^{R}$
with mean $\driftinitial\in \R^{\nAktien}$ and covariance matrix
$\covinitial\in\mathbb R^{\nAktien\times\nAktien}$ assumed to be
symmetric and  positive semidefinite.  It is well known that  SDE
\eqref{drift} has the closed-form solution
\begin{eqnarray}
\label{mu_explicit} \mu_t = \revlevel +e^{-\revspeed t}\Big[(\mu_0
-\revlevel) +   \int_0^t e^{\revspeed s} \voldrift
dW^{\mu}_s\Big],\quad t\ge 0.
\end{eqnarray}
This is a  Gaussian process and  known  as  Ornstein-Uhlenbeck
process. It has mean value and covariance function
\begin{align*}
\Erw[\drift_t]&=\revlevel+e^{-\revspeed
t}(\driftinitial-\revlevel)\quad \text{and}\quad\\
\operatorname{Cov}(\drift_s,\drift_t)&=e^{-\revspeed s}
\Bigg(\covinitial+\int\limits_0^{\min\{s,t\}} e^{\revspeed u}
\Sigma_{\drift} e^{\revspeed^{\!\top} u} du \Bigg)
e^{-\revspeed^{\!\top} t},\quad s,t\ge 0.
\end{align*}

\subsection{Expert Opinions}
\label{Expert_Opinions}  We assume that investors observe the return
process $R$ but they neither observe the factor process $\mu$ nor
the Brownian motion $W^{R}$. They do however know the model
parameters such as $\vol_R,\revspeed, \revlevel, \voldrift $  and
the distribution $\mathcal{N}(\driftinitial,\covinitial)$ of  the
initial value $\drift_0$. Information about the drift $\mu$ can be
drawn from observing the returns $R$. A special feature of our model
is that investors may also have access to additional information
about the drift in form of \textit{expert opinions} such as news,
company reports, ratings or their own intuitive views on the future
asset performance. The expert opinions provide  noisy signals about
the current state of the drift arriving at discrete points in time
$T_k$. We model these expert opinions  by a marked point process
$(T_k,Z_k)_k$, so that at $T_k$ the investor observes the
realization of a random vector $Z_k$ whose distribution depends on
the current state $\mu_{T_k}$ of the drift process. The arrival
dates $T_k$ are modelled as jump times of a standard Poisson process
with intensity $\lambda>0$, independent of {\neu both the Brownian motions $W^\HR, W^\myzeta$ and the initial value of the drift  $\mu_0$}, so that the timing
of the information arrival
does not carry any useful information about the drift. 
For the sake of convenience we also write $T_0:=0$ although no
expert opinion arrives at time $t=0$.

The signals or ``the expert views'' at time $T_k$ are modelled by
$\R^\nAktien$-valued  Gaussian random vectors
$Z_k=(Z_k^1,\cdots,Z_k^{\nAktien})^{\top}$ with
\begin{align}
\label{Expertenmeinungen_fest}
Z_k=\drift_{T_k}+{\varianceexp}^{\frac{1}{2}}\varepsilon_k,\quad
k=1,2,\ldots,
\end{align}
 where the matrix  $\varianceexp\in\R^{\nAktien\times\nAktien}$ is
symmetric and positive definite.
Further,  $(\varepsilon_k)_{k\ge 1}$ is a sequence of independent
standard normally distributed random vectors, i.e.,
$\varepsilon_k\sim \mathcal{N}(0,I_d)$. It is  also independent of
the Brownian motions $W^R, W^\mu$,  the initial value $\mu_0$
{\neu and the arrival dates $(T_k)_{k\ge 1}$}. That means that, given $\mu_{T_k}$, the expert
opinion $Z_k$ is $\mathcal{N}(\mu_{T_k},\varianceexp)$-distributed.
So, $Z_k$ can be considered as an unbiased estimate of the unknown
state of the drift at time $T_k$. The matrix $\reliable$ is a
measure of the expert's reliability. In a model with $\nAktien=1$
risky asset $\reliable$ is just the variance of the expert's
estimate of the drift at time $T_k$: the larger $\varexp$ the less
reliable is the expert.

Note that one may also allow for relative expert views where experts
give an estimate for the difference in the drift of two stocks
instead of absolute views. This extension   is studied in
Sch\"ottle et al.~\cite{Schoettle et al. (2010)} where the authors
show how to switch between these two models for expert opinions by
means of a pick matrix.

Finally, we introduce expert opinions arriving continuously over time. This is motivated by the results of 
 Sass et al. \cite{Sass et al (2018)}. There  the authors consider the information drawn from   observing  certain sequences of expert opinions and  show that for a large number of expert opinions it is essentially the same as the information resulting  from observing another diffusion process. 
 The interpretation of that  diffusion process is an expert providing continuous-time estimates about the state of the drift.  Let this estimate be given by the diffusion process
\begin{equation}\label{continuous_expert}
 d\myzeta_t =\mu_t\,d t +\sigma_\myzeta\,d W^\myzeta_t,
\end{equation}
where $W^\myzeta$ is an $\nWienerExperten$-dimensional Brownian motion with $\nWienerExperten\geq d$ that is independent of all other Brownian motions in the model and of the information dates $T_k$. 
The constant matrix $\sigma_\myzeta\in\R^{d\times \nWienerExperten}$ is  assumed to be  such that $\Sigma_{\myzeta}:=\sigma_\myzeta\sigma_\myzeta^{\top}$ is positive definite.

\subsection{Investor Filtration}
\label{Investor_Filtration}  We consider various types of investors
with different levels of information. The information available to
an investor is described by the \textit{investor filtration}
$\mathbb{F}^H=(\mathcal{F}^H_t)_{t\in[0,T]}$. Here, $H$ denotes the
information regime for which we
 consider the cases $H=\HR,\HC,\HD,\HF$, where
\[\begin{array}{rcll}
\mathbb{F}^{\HR}&=& (\mathcal {F}_t^{\HR})_{t \in [0,T]} & \text{with }\mathcal {F}_t^{\HR}=\sigma(R_s,~ s\le t), \\[0.5ex]
\mathbb{F}^{\HC}&=& (\mathcal {F}_t^{\HC})_{t \in [0,T]} & \text{with }\mathcal {F}_t^{\HC}=\sigma(R_s, s\le t,\,~(T_k,Z_k),~ T_k\le t), \\[0.5ex]
\mathbb{F}^{\HD}&=& (\mathcal {F}_t^{\HD})_{t \in [0,T]} &  \text{with }\mathcal {F}_t^{\HD}=\sigma(R_s,\myzeta_s,~ s\le t), \\[0.5ex]
 \mathbb{F}^F&=&  (\mathcal {F}_t^{F})_{t \in [0,T]} &  \text{with }\mathcal {F}_t^{F}=\sigma(R_s, \mu_s,~ s\le t).
\end{array}
\]
We assume that the above $\sigma$-algebras $\mathcal{F}_t^H$
 are augmented by the null sets 
 of $P$. 
{\neu We call the investor with filtration $\mathbb{F}^H=(\mathcal{F}^H_t)_{t\in[0,T]}$ the $H$-investor.
The $\HR$-investor observes only the return process $R$, the $\HC$-investor combines
return observations with the discrete-time expert opinions
$Z_k$ while  the $\HD$-investor observes the return process together with the continuous-time expert $\myzeta$. Finally,  the $F$-investor has full information  and can observe the drift process $\mu$.} 
For stochastic
drift this case is not realistic, but we use it as a benchmark  and
in the next section  it will serve as a limiting case for
high-frequency expert opinions. We will denote an investor  with investor filtration $\mathbb{F}^H$ as $H$-investor. 

We assume that at $t=0$ the  partially informed investors
start with the same initial information given by the
$\sigma$-algebra $\mathcal{F}_0^I$, i.e., $\mathcal{F}_0^H=\mathcal
{F}_0^I\subset \mathcal F_0^{\HF}$, $H=\HR,\HC,\HD$. This initial
information $\mathcal{F}_0^I$ models prior knowledge about the drift
process at time $t=0$, e.g., from observing  returns  or expert
opinions in the past before the trading period $[0,T]$. We assume
that the conditional distribution of the initial drift value $\mu_0$
given $\mathcal{F}_0^H$ is the normal distribution
$\mathcal{N}(m_0,q_0)$ with mean $\filterinitial\in \R^{\nAktien}$
and covariance matrix $\condcovinitial\in\mathbb
R^{\nAktien\times\nAktien}$ assumed to be symmetric and  positive
semidefinite.
 In this setting typical examples are:
\begin{enumerate}
\renewcommand{\labelenumi}{\alph{enumi})}
\item  The investor has no  information about the initial value of the drift $\mu_0$. However, he knows  the model parameters,
in particular the distribution
$\mathcal{N}(\driftinitial,\covinitial)$ of $\mu_0$ with  given
parameters $\driftinitial$ and $\covinitial$. This corresponds to
$\mathcal{F}_0^I=
\{\varnothing,\Omega\}$ and
 $\filterinitial=\driftinitial$, $\condcovinitial=\covinitial$.

\item  The investor can fully observe the initial value of the drift $\mu_0$, which corresponds
to $\mathcal{F}_0^I=\mathcal{F}_0^F$ and
$\filterinitial=\drift_0(\omega)$ and $\condcovinitial=0$.

\item  Between the above limiting cases we consider an investor who has some prior but no complete information about
$\mu_0$ leading to $\{\varnothing,\Omega\}\subset \mathcal
{F}_0^I\subset\mathcal F_0^F $.
\end{enumerate}

\section{Partial Information and Filtering}
\label{Filtering}  The trading decisions of investors are based on
their knowledge about the drift process $\mu$. While the
$F$-investor observes the drift directly, the $H$-investor for
$H=\HR,\HC,\HD$ has to estimate it. This leads us to a filtering problem
with hidden signal process $\mu$ and observations given by the
returns $R$ and expert opinions $(T_k,Z_k)$  or $J$. The \textit{filter} for
the drift  $\mu_t$ is its projection on the
$\mathcal{F}_t^H$-measurable random variables described by the
conditional distribution of the drift given $\mathcal{F}_t^H$. The
mean-square optimal estimator for the drift at time $t$, given the
available information is  the \textit{conditional mean}
    $$\Mpro_t^{H}:=\Erw[\mu_t|\mathcal{F}_t^H].$$
The accuracy of that estimator  can be described by the
\textit{conditional covariance matrix}
\begin{align}
\label{cond_variance__def}
\Qpro_t^{H}:=\Erw[(\mu_t-\Mpro^{H}_t)(\mu_t-\Mpro^{H}_t)^{\top}|\mathcal{F}^{H}_t].
\end{align}
Since in our filtering problem  the signal   $\mu$, the observations
and the initial value of the filter  are jointly Gaussian also the
filter distribution is Gaussian and completely characterized by the
conditional mean $\Mpro_t^{H}$ and the conditional covariance
$\Qpro_t^{H}$.

In  Section \ref{asymptotic_filter} we will study the asymptotic behavior of the
filter for the $\HC$-investor observing  expert opinions  arriving
more and more frequently and   derive limit theorems for the filter
if the  arrival intensity $\lambda$ tends to infinity. 
Section \ref{asymptotic_filter_cont} is devoted to a related problem and considers the asymptotics  of the	filter processes for the $\HD$-investor with volatility $\sigma_\myzeta $ tending to zero.
These results
are based on the following dynamics of the filters for $H=\HR,\HC,\HD$ which
already can be found in  Sass et al.~\cite{Sass et al (2017), Sass
et al (2018)}.

\subsection{$R$- and $\HD$-Investor}
The $R$-investor  only observes returns and has no access to
additional expert opinions,  the information is given by
$\mathbb{F}^R$. Then, we are in the classical case of the  Kalman
filter, see e.g.~Liptser and Shiryaev \cite{Liptser-Shiryaev},
Theorem $10.3$, leading to the following  dynamics of $\Mpro^R$ and
$\Qpro^R$.

\begin{lemma}
\label{Kalmann_Filter_R_lemma} For the $R$-investor the filter  is
Gaussian and the conditional distribution of the drift $\mu_t$ given
$\mathcal F_t^{R}$ is the
normal  distribution $\mathcal N\left(\Mpro_t^{R},\Qpro_t^{R}\right)$.\\
The conditional mean $\Mpro^{R}$ follows the dynamics
\begin{align}
\label{Filter_R}
d\Mpro_t^{R}&=\revspeed(\revlevel-\Mpro_t^{R})\;dt+\Qpro_t^{R}\,
{\neu \Sigma_R^{-1}\; \left(dR_t-\Mpro_t^\HR dt\right)}.
\end{align}
The dynamics of the conditional covariance  $\Qpro^{R}$ is given by
the   Riccati differential equation
\begin{align}
\label{Riccati_R} d\Qpro_t^{R}&=(\Sigma_{\mu}-\revspeed
\Qpro_t^{R}-\Qpro_t^{R} \revspeed^{\top}-\Qpro_t^{R} \Sigma_{R}^{-1}
\Qpro_t^{R})\; dt.
\end{align}
The initial values are $\Mpro_0^{R}=\filterinitial $ and
$\Qpro_0^{R}=\condcovinitial$.
\end{lemma}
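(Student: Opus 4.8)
The plan is to recognize \eqref{ReturnPro} together with \eqref{drift} as a linear Gaussian system and to invoke the Kalman--Bucy filter. The signal solves the linear SDE $d\mu_t = (\revspeed\revlevel - \revspeed\mu_t)\,dt + \voldrift\,dW^\mu_t$, and the observation $R$ has drift linear in the signal, $dR_t = \mu_t\,dt + \volR\,dW^\HR_t$, with the two driving Brownian motions $W^\mu$ and $W^\HR$ independent. Matching this to the coefficients of Liptser--Shiryaev \cite{Liptser-Shiryaev}, Theorem~$10.3$ (signal drift terms $\revspeed\revlevel$ and $-\revspeed$, signal diffusion $\voldrift$ with $\voldrift\voldrift^\top = \Sigma_\mu$; observation sensitivity $I_d$ and observation noise $\volR$ with $\volR\volR^\top = \Sigma_R$), the substantive point is that the independence of $W^\mu$ and $W^\HR$ makes the signal/observation noise cross-covariance vanish, so the filter gain reduces to $\Qpro_t^\HR\Sigma_R^{-1}$ and the equations \eqref{Filter_R}--\eqref{Riccati_R} fall out after substitution. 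Before applying the theorem I would check its hypotheses: the conditional law of $\mu_0$ given $\mathcal F_0^\HR$ is $\mathcal N(\filterinitial,\condcovinitial)$ by assumption, $\Sigma_R$ is positive definite so $(\volR\volR^\top)^{-1}$ exists, and the constant coefficients trivially satisfy the required measurability and integrability bounds.

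For a more self-contained derivation I would argue from first principles via the innovations method. First, since $\mu_0$ is Gaussian and both \eqref{drift} and \eqref{ReturnPro} are linear with Gaussian driving noise, the pair $(\mu, R)$ is a Gaussian process; hence the conditional distribution of $\mu_t$ given $\mathcal F_t^\HR$ is Gaussian and, crucially, its covariance $\Qpro_t^\HR$ is deterministic, i.e.\ it does not depend on the observed path. Second, the innovations process $\Innov_t := \Sigma_R^{-1/2}\big(R_t - R_0 - \int_0^t \Mpro_s^\HR\,ds\big)$ is a $d$-dimensional $\mathbb F^\HR$-Brownian motion. The conditional mean equation then follows from the Fujisaki--Kallianpur--Kunita filtering equation applied to the identity: with generator $\generator f(\mu) = \revspeed(\revlevel - \mu)$ for $f(\mu)=\mu$ and observation function $h(\mu)=\mu$, one obtains $d\Mpro_t^\HR = \revspeed(\revlevel - \Mpro_t^\HR)\,dt + \big(\Erw[\mu_t\mu_t^\top\mid\mathcal F_t^\HR] - \Mpro_t^\HR(\Mpro_t^\HR)^\top\big)\Sigma_R^{-1}(dR_t - \Mpro_t^\HR\,dt)$, and the bracketed term is exactly $\Qpro_t^\HR$, which gives \eqref{Filter_R}.

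It remains to obtain the Riccati equation \eqref{Riccati_R}. Writing the error process $e_t := \mu_t - \Mpro_t^\HR$ and subtracting \eqref{Filter_R} from \eqref{drift}, using $dR_t - \Mpro_t^\HR\,dt = e_t\,dt + \volR\,dW^\HR_t$, yields $de_t = -(\revspeed + \Qpro_t^\HR\Sigma_R^{-1})e_t\,dt + \voldrift\,dW^\mu_t - \Qpro_t^\HR\Sigma_R^{-1}\volR\,dW^\HR_t$. Applying It\^o's formula to $e_t e_t^\top$, taking expectations (the martingale parts vanish), and using that the deterministic $\Qpro_t^\HR$ equals the unconditional error covariance $\Erw[e_t e_t^\top]$ together with the independence of $W^\mu, W^\HR$ (so the quadratic variation contributes $\Sigma_\mu + \Qpro_t^\HR\Sigma_R^{-1}\Qpro_t^\HR$), the terms $\Qpro_t^\HR\Sigma_R^{-1}\Qpro_t^\HR$ cancel and leave precisely \eqref{Riccati_R}, with the stated initial values $\Mpro_0^\HR=\filterinitial$, $\Qpro_0^\HR=\condcovinitial$. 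The main obstacle is the Gaussianity step, i.e.\ justifying that the conditional law is Gaussian with a \emph{deterministic} covariance; this is what makes the otherwise infinite hierarchy of conditional-moment equations close at second order and reduces the whole problem to the linear mean equation plus the matrix Riccati ODE. Everything else is substitution and an It\^o computation.
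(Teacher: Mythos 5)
Your first paragraph is essentially the paper's entire proof: the paper establishes Lemma \ref{Kalmann_Filter_R_lemma} solely by observing that $(\mu,R)$ is a linear Gaussian system and citing Liptser--Shiryaev \cite{Liptser-Shiryaev}, Theorem 10.3, and your coefficient matching (signal drift $\revspeed\revlevel-\revspeed\mu_t$, observation sensitivity $I_d$, and the vanishing signal/observation noise cross-covariance due to the independence of $W^\mu$ and $W^{\HR}$, which reduces the gain to $\Qpro_t^{\HR}\Sigma_R^{-1}$) is exactly that argument made explicit. Your second, self-contained derivation --- Gaussianity of the pair $(\mu,R)$ forcing a Gaussian conditional law with \emph{deterministic} covariance, the Fujisaki--Kallianpur--Kunita equation for $f(\mu)=\mu$ closing at second order, and the It\^o computation on $e_te_t^\top$ for the error process --- goes beyond what the paper records but is also sound; it buys independence from the textbook reference at the cost of having to justify the innovations representation. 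One phrasing to tighten: the quadratic-variation contribution $\Qpro_t^{\HR}\Sigma_R^{-1}\Qpro_t^{\HR}$ cancels only \emph{one} of the two such terms produced by the drift of $e_t$ (one from $(de_t)e_t^\top$, one from $e_t(de_t)^\top$), and it is the surviving term that gives $-\Qpro_t^{\HR}\Sigma_R^{-1}\Qpro_t^{\HR}$ in \eqref{Riccati_R}; the net computation you describe is nevertheless correct.
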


Note that the conditional covariance matrix $\Qpro_t^{R}$ satisfies
an ordinary differential equation and is hence deterministic,
whereas the conditional mean $\Mpro_t^{R}$ is a stochastic process
defined by an SDE {\neu driven by the return process $R$. 

 Next, we consider the $\HD$-investor who observes a $2d$-dimensional diffusion process with components $R$ and $\myzeta$. That observation process is driven by a $(\nWienerRendite+\nWienerExperten)$-dimensional Brownian motion with components $W^\HR$ and $W^\myzeta$.}
Again, we can apply classical Kalman filter theory as in Liptser and Shiryaev \cite{Liptser-Shiryaev} to deduce the dynamics of $\Mpro^\HD$ and $\Qpro^{\HD}$. We also refer to Lemma 2.2 in the companion paper  \cite{Sass et al (2018)}.
 
\begin{lemma}
	\label{Kalmann_Filter_D_lemma} For the $\HD$-investor the filter  is
	Gaussian and the conditional distribution of the drift $\mu_t$ given
	$\mathcal F_t^{\HD}$ is the
	normal  distribution $\mathcal N\left(\Mpro_t^{\HD},\Qpro_t^{\HD}\right)$.\\
	The conditional mean $\Mpro^{\HD}$ follows the dynamics
	\begin{align}
	\label{Filter_D}
	d\Mpro_t^{\HD}&=\revspeed(\revlevel-\Mpro_t^{\HD})\;dt+\Qpro_t^{\HD}\,
	{\neu (\Sigma_R^{-1}, \Sigma_\myzeta^{-1})
	\begin{pmatrix} 
	dR_t-\Mpro_t^\HD	dt \\
	d\myzeta_t-\Mpro_t^\HD	dt
	\end{pmatrix}}.
	\end{align}
	The dynamics of the conditional covariance  $\Qpro^{\HD}$ is given by
	the  Riccati differential equation
	\begin{align}
	\label{Riccati_D} d\Qpro_t^{\HD}&=(\Sigma_{\mu}-\revspeed
	\Qpro_t^{\HD}-\Qpro_t^{\HD} \revspeed^{\top}-\Qpro_t^{\HD} {\neu (\Sigma_{R}^{-1} + \Sigma_{\myzeta}^{-1})}
	\Qpro_t^{\HD})\; dt. 	
	\end{align}
	The initial values are $\Mpro_0^{\HD}=\filterinitial $ and 	$\Qpro_0^{\HD}=\condcovinitial$.
\end{lemma}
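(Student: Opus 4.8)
The plan is to recognize that filtering for the $\HD$-investor is a classical linear--Gaussian problem with a stacked observation, and then to apply the Kalman--Bucy theorem (Liptser and Shiryaev \cite{Liptser-Shiryaev}, Theorem $10.3$) directly. First I would collect the two observed diffusions into a single $2\nAktien$-dimensional observation $Y=(R^\top,\myzeta^\top)^\top$, so that $\mathcal F_t^{\HD}=\sigma(Y_s,\,s\le t)$ and filtering with respect to $\mathbb F^{\HD}$ coincides with filtering against the filtration generated by $Y$. Rewriting the signal equation \eqref{drift} as $d\mu_t=(\revspeed\revlevel-\revspeed\mu_t)\,dt+\voldrift\,dW^\mu_t$ and combining \eqref{ReturnPro} with \eqref{continuous_expert} casts the observation into the standard form
\[
dY_t=\mathcal A\,\mu_t\,dt+\mathcal B\,dW_t,\qquad
\mathcal A=\begin{pmatrix}I_\nAktien\\ I_\nAktien\end{pmatrix},\quad
\mathcal B=\begin{pmatrix}\volR & 0\\ 0 & \sigma_\myzeta\end{pmatrix},\quad
W=\begin{pmatrix}W^\HR\\ W^\myzeta\end{pmatrix}.
\]

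Next I would check the hypotheses needed for the Kalman--Bucy formulas. Because $W^\HR$ and $W^\myzeta$ are independent, the observation noise covariance is block-diagonal, $\mathcal B\mathcal B^\top=\diag(\Sigma_R,\Sigma_\myzeta)$; since $\Sigma_R$ and $\Sigma_\myzeta$ are positive definite by assumption, this matrix is nonsingular, which is exactly what is required for the filter gain to be well defined. Moreover, $W^\mu$ is independent of $(W^\HR,W^\myzeta)$, so signal and observation noise are uncorrelated and the cross term in the general filtering equations drops out. Finally, the prior $\mu_0\mid\mathcal F_0^{\HD}\sim\mathcal N(\filterinitial,\condcovinitial)$ is Gaussian; together with the linear dynamics this makes $(\mu,Y)$ jointly Gaussian, so the conditional law of $\mu_t$ given $\mathcal F_t^{\HD}$ is $\mathcal N(\Mpro_t^{\HD},\Qpro_t^{\HD})$ as claimed.

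It then remains to read off the filter coefficients and simplify the matrix products. The Kalman gain equals $\Qpro_t^{\HD}\mathcal A^\top(\mathcal B\mathcal B^\top)^{-1}=\Qpro_t^{\HD}\,(\Sigma_R^{-1},\Sigma_\myzeta^{-1})$ and the innovation is $dY_t-\mathcal A\Mpro_t^{\HD}\,dt$, whose two blocks are $dR_t-\Mpro_t^{\HD}\,dt$ and $d\myzeta_t-\Mpro_t^{\HD}\,dt$; this gives precisely the mean dynamics \eqref{Filter_D}. In the Riccati equation the quadratic feedback term is $\Qpro_t^{\HD}\mathcal A^\top(\mathcal B\mathcal B^\top)^{-1}\mathcal A\,\Qpro_t^{\HD}=\Qpro_t^{\HD}(\Sigma_R^{-1}+\Sigma_\myzeta^{-1})\Qpro_t^{\HD}$, while the remaining terms $\Sigma_\mu-\revspeed\Qpro_t^{\HD}-\Qpro_t^{\HD}\revspeed^\top$ are inherited directly from the Ornstein--Uhlenbeck coefficients, yielding \eqref{Riccati_D}; the initial conditions $\Mpro_0^{\HD}=\filterinitial$, $\Qpro_0^{\HD}=\condcovinitial$ are the prior mean and covariance.

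Since the statement is a textbook consequence of the Kalman--Bucy theorem once the observation is written in stacked form, I do not expect a genuine obstacle; the only point requiring care is the nonsingularity of $\mathcal B\mathcal B^\top$, which is precisely where the positive definiteness of $\Sigma_R$ and $\Sigma_\myzeta$ enters. As a consistency check, formally deleting the expert channel (dropping the $\Sigma_\myzeta^{-1}$ terms) recovers the $R$-investor filter of Lemma \ref{Kalmann_Filter_R_lemma}.
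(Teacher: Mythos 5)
Your proposal is correct and follows exactly the route the paper takes: the paper also treats the $\HD$-investor's observation as the stacked $2\nAktien$-dimensional diffusion $(R,\myzeta)$ driven by the $(\nWienerRendite+\nWienerExperten)$-dimensional Brownian motion $(W^{\HR},W^{\myzeta})$ and invokes classical Kalman--Bucy theory from Liptser and Shiryaev (deferring details to the companion paper of Sass et al.). Your write-up merely carries out explicitly the gain and Riccati computations that the paper leaves to the cited references, so there is no substantive difference.
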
 
 
Note that,  as in case of the $R$-investor, the conditional covariance  $\Qpro^{\HD}$ is deterministic.

\subsection{$\HC$-Investor}
Now we consider the filter for the $\HC$-investor who
 combines continuous-time observations of  stock
returns and expert opinions received at discrete points in time.
\begin{lemma}
\label{filter_C}For the $\HC$-investor the filter  is Gaussian and the
conditional distribution of the drift $\mu_t$ given $\mathcal
F_t^{\HC}$ is the normal  distribution $\mathcal
N\left(\Mpro_t^{\HC},\Qpro_t^{\HC}\right)$.
\ \\[-2ex]
\begin{enumerate}
\item[(i)]
Between two information dates $T_k$ and $T_{k+1}$, $k\in\N_0$,   the
conditional mean  $\Mpro_t^\HC$  satisfies SDE \eqref{Filter_R}, i.e.,
\begin{align*}
d\Mpro_t^{\HC}&=\revspeed(\revlevel-\Mpro_t^{\HC})\;dt+\Qpro_t^{\HC}\,
{\neu \Sigma_R^{-1}\;\left(dR_t-\Mpro_t^\HC dt\right)} \quad\text{for}~~ t\in
[T_k,T_{k+1}).
\end{align*}
The conditional covariance $\Qpro^{\HC}$ satisfies the ordinary Riccati
differential equation \eqref{Riccati_R}, i.e.,
\begin{align*}
d\Qpro_t^{\HC}&=(\Sigma_{\mu}-\revspeed
\Qpro_t^{\HC}-\Qpro_t^{\HC} \revspeed^{\top}-\Qpro_t^{\HC} \Sigma_{R}^{-1}
\Qpro_t^{\HC})\; dt.
\end{align*}
The initial values are $\Mpro_{T_k}^\HC $ and $\Qpro^\HC_{T_k}$,
respectively, with $\Mpro_{0}^{\HC}=\filterinitial$ and~
$\Qpro_{0}^{\HC}=\condcovinitial$.
\item[(ii)]
At the information dates $T_k$,  $k\in\N$, the conditional mean  and
covariance $\Mpro_{T_k}^\HC$ and $\Qpro_{T_k}^\HC$ are obtained from the
corresponding values at time $T_{k^-}$ (before the arrival of the
view) using the update formulas
\begin{align*}
\Mpro_{T_k}^{\HC}&=\rho_k\Mpro_{T_k-}^{\HC}+(I_d-\rho_k)Z_k,\\[1.5ex]
\Qpro_{T_k}^{\HC}&=\rho_k \Qpro_{T_k-}^{\HC},
\end{align*}
with the update factor
$\rho_k=\reliable(\Qpro_{T_k-}^{\HC}+\reliable)^{-1}$.
\end{enumerate}
\end{lemma}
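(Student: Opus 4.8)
The plan is to exploit two structural features of the model. First, the arrival dates $(T_k)$ form a Poisson process that is independent of $\mu$, $W^{\HR}$, $W^{\myzeta}$ and $\mu_0$, so the timing of the signals carries no information about the drift. Second, the signal $\mu$, the returns $R$ and the expert signals $Z_k$ are jointly Gaussian, so every conditional law we meet is Gaussian and hence determined by its first two moments. A convenient way to organise the argument is to work conditionally on the $\sigma$-algebra $\mathcal F^N$ generated by the Poisson process, i.e.\ to fix a realisation of the dates $T_1<T_2<\cdots$. Under this conditioning we face the filtering of the Ornstein--Uhlenbeck signal $\mu$ from the continuously observed returns $R$ together with discrete Gaussian observations $Z_k$ at the now \emph{fixed} times $T_k$, a setting already treated in Gabih et al.~\cite{Gabih et al (2014)} and Sass et al.~\cite{Sass et al (2017)}. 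The independence of $(T_k)$ then lets me remove the conditioning without altering the formulas.

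For part (i) I would argue as follows. On $[T_k,T_{k+1})$ no new expert opinion arrives, so relative to the information already available at $T_k$ the only additional observation accumulated up to $t<T_{k+1}$ is the return path $(R_s)_{s\in[T_k,t]}$. Taking the Gaussian posterior $\mu_{T_k}\mid\mathcal F_{T_k}^{\HC}\sim\mathcal N(\Mpro^{\HC}_{T_k},\Qpro^{\HC}_{T_k})$ as the new prior, this is exactly the $R$-investor's filtering problem of Lemma~\ref{Kalmann_Filter_R_lemma}, only with initial time $T_k$ and initial data $(\Mpro^{\HC}_{T_k},\Qpro^{\HC}_{T_k})$ in place of $0$ and $(\filterinitial,\condcovinitial)$. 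Hence $\Mpro^{\HC}$ obeys \eqref{Filter_R} and $\Qpro^{\HC}$ the Riccati equation \eqref{Riccati_R} on that interval.

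For part (ii), the update at $T_k$ is a standard Gaussian (Black--Litterman / normal--normal) conjugacy step. Writing $m^-:=\Mpro^{\HC}_{T_k-}$ and $q^-:=\Qpro^{\HC}_{T_k-}$ for the predicted mean and covariance, the prior is $\mu_{T_k}\sim\mathcal N(m^-,q^-)$ and, by \eqref{Expertenmeinungen_fest}, the new observation satisfies $Z_k\mid\mu_{T_k}\sim\mathcal N(\mu_{T_k},\reliable)$. Bayes' rule for jointly Gaussian vectors gives the posterior covariance $\Qpro^{\HC}_{T_k}=\big((q^-)^{-1}+\reliable^{-1}\big)^{-1}$ and mean $\Mpro^{\HC}_{T_k}=\Qpro^{\HC}_{T_k}\big((q^-)^{-1}m^-+\reliable^{-1}Z_k\big)$. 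It then remains to rewrite these in the stated form: from the factorisation $(q^-)^{-1}+\reliable^{-1}=(q^-)^{-1}(q^-+\reliable)\reliable^{-1}$ one obtains $\Qpro^{\HC}_{T_k}=\reliable(\Qpro^{\HC}_{T_k-}+\reliable)^{-1}\Qpro^{\HC}_{T_k-}=\rho_k\Qpro^{\HC}_{T_k-}$, and substituting back into the mean, using $\Qpro^{\HC}_{T_k}(q^-)^{-1}=\rho_k$ and $\Qpro^{\HC}_{T_k}\reliable^{-1}=I_d-\rho_k$, yields $\Mpro^{\HC}_{T_k}=\rho_k m^-+(I_d-\rho_k)Z_k$.

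The step I expect to require the most care is the rigorous justification that the arrival dates — and in particular the \emph{absence} of a jump on an interval $(T_k,t)$ — convey no information about $\mu$, so that the conditional law of $\mu_t$ given $\mathcal F_t^{\HC}$ genuinely coincides on $[T_k,T_{k+1})$ with the $R$-investor's filter reinitialised at $T_k$. This is exactly where the independence of the Poisson process from $(\mu,W^{\HR},W^{\myzeta},\mu_0)$ and of $(\varepsilon_k)$ from $(T_k)$ must be invoked carefully; the device of conditioning on $\mathcal F^N$ is designed precisely to make this transparent, after which one calls upon the known fixed-date results and undoes the conditioning by independence.
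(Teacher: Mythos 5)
Your proposal is correct and takes essentially the same route as the paper, whose own "proof" is merely a citation of Lemma 2.3 in Sass et al.\ (2017) and Lemma 2.3 in Sass et al.\ (2018): those proofs proceed exactly as you outline, conditioning on the realisation of the (independent) Poisson arrival times to reduce to the fixed-date filtering problem, applying the Kalman filter re-initialised at each $T_k$ between arrivals, and performing the Gaussian conjugate update at arrivals, with the tower property undoing the conditioning because the resulting formulas involve no future dates. The one refinement you should make is in part (ii): your precision-matrix derivation writes $(q^-)^{-1}$, which is not guaranteed to exist since $\condcovinitial$ (and hence a priori $\Qpro^{\HC}_{T_k-}$) is only assumed positive semidefinite; it is cleaner to condition the jointly Gaussian pair $(\drift_{T_k},Z_k)$ directly, which gives $\Mpro^{\HC}_{T_k}=m^-+q^-\left(q^-+\reliable\right)^{-1}(Z_k-m^-)$ and $\Qpro^{\HC}_{T_k}=q^--q^-\left(q^-+\reliable\right)^{-1}q^-$, and these rearrange to the stated update formulas via $I_d-\rho_k=q^-\left(q^-+\reliable\right)^{-1}$ using only the invertibility of $q^-+\reliable$, which is guaranteed because $\reliable$ is positive definite.
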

\begin{proof}
For a detailed proof we refer to  Lemma 2.3 in \cite{Sass et al
(2017)} and  Lemma 2.3 in \cite{Sass et al (2018)}.
\end{proof}
Note that the dynamics of $\Mpro^\HC$ and $\Qpro^\HC$ between
information dates are the same as for the $R$-investor, see Lemma
\ref{Kalmann_Filter_R_lemma}. The values at an information date
$T_k$ are obtained from a Bayesian update.

Recall that  for the $R$-investor the conditional mean $\Mpro^R$ is
a diffusion process and the conditional covariance $\Qpro^R$ is
deterministic. Contrary to that  the conditional mean $\Mpro^\HC$ of
the  $\HC$-investor is a jump-diffusion process and the conditional
covariance $\Qpro^\HC$ is no longer deterministic since the updates
lead to jumps at the random arrival dates  $T_k$ of the expert
opinions. Hence, $\Qpro^\HC$ is a piecewise deterministic stochastic
process.

\subsection{Properties of the Filter}
 The next lemma states in  mathematical terms  the intuitive
property that additional information from the expert opinions
improves drift estimates. Since the accuracy of the filter is
measured by the conditional covariance it is expected that this
quantity for the $\HC$-investor who combines observations of returns
and  expert opinions  is ``smaller'' than for the $R$-investor who
observes returns only.
Mathematically, this can be expressed by the   partial ordering of
symmetric matrices. For symmetric matrices $A,B\in\R^{d\times d}$ we
write $A \preceq B$ if $B-A$ is positive semidefinite. Note that $A
\preceq B$   implies that $\norm{A}\le \norm{B}$.

\begin{proposition}
\label{properties_filter} It holds $\Qpro^\HC_t \preceq \Qpro^R_t$ and $\Qpro^\HD_t \preceq \Qpro^R_t$. In particular, there exists a constant $C_{\Qpro}>0 $ such that 
$\norm{\Qpro^\HC_t} \le \norm{\Qpro^R_t}\le C_{\Qpro} $ and $\norm{\Qpro^\HD_t} \le \norm{\Qpro^R_t}\le C_{\Qpro} $ for all $t\in[0,T]$.
\end{proposition}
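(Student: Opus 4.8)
The plan is to derive both matrix inequalities from a single monotonicity (comparison) principle for the Riccati flow, combined with the observation that each Bayesian update of the $\HC$-investor can only decrease the conditional covariance in the positive-semidefinite ordering. Write $f(Q)=\Sigma_\mu-\revspeed Q-Q\revspeed^{\top}-Q\Sigma_R^{-1}Q$ for the right-hand side of the Riccati ODE \eqref{Riccati_R}. The key lemma I would establish first is a comparison statement: if two symmetric solutions $Q_1,Q_2$ of $\dot Q=f(Q)$ satisfy $Q_1(t_0)\preceq Q_2(t_0)$, then $Q_1(t)\preceq Q_2(t)$ for all $t\ge t_0$.

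To prove this principle, set $D=Q_2-Q_1$ and $M=\tfrac12(Q_1+Q_2)$. Using the difference-of-quadratics identity $Q_2\Sigma_R^{-1}Q_2-Q_1\Sigma_R^{-1}Q_1=M\Sigma_R^{-1}D+D\Sigma_R^{-1}M$, which holds because $\Sigma_R^{-1}$ is symmetric, one finds that $D$ solves the linear matrix equation $\dot D=-A(t)\,D-D\,A(t)^{\top}$ with $A(t)=\revspeed+M(t)\Sigma_R^{-1}$. Its solution is the congruence $D(t)=\Phi(t,t_0)\,D(t_0)\,\Phi(t,t_0)^{\top}$, where $\Phi$ is the fundamental matrix of $\dot\Phi=-A(t)\Phi$ with $\Phi(t_0,t_0)=I_d$. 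Since $M$ is bounded on the compact interval (being the average of two continuous, bounded Riccati solutions), $\Phi$ is well defined on $[0,T]$, and the congruence preserves positive semidefiniteness, so $D(t_0)\succeq0$ forces $D(t)\succeq0$. This Lyapunov/congruence argument is the main technical point of the proof.

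For $\Qpro^\HC$ I would argue inductively over the intervals $[T_k,T_{k+1})$. Between arrival dates $\Qpro^\HC$ obeys the same ODE \eqref{Riccati_R} as $\Qpro^R$, and both share the initial value $\condcovinitial$, so they coincide on $[0,T_1)$. At each date $T_k$ the update of Lemma~\ref{filter_C}(ii), rewritten with $\reliable=(\Qpro^\HC_{T_k-}+\reliable)-\Qpro^\HC_{T_k-}$, takes the manifestly symmetric form $\Qpro^\HC_{T_k}=\Qpro^\HC_{T_k-}-\Qpro^\HC_{T_k-}(\Qpro^\HC_{T_k-}+\reliable)^{-1}\Qpro^\HC_{T_k-}\preceq\Qpro^\HC_{T_k-}$, so a jump can only lower the covariance. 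The comparison principle propagates $\Qpro^\HC_{T_k}\preceq\Qpro^R_{T_k}$ across each interval, the jump at $T_{k+1}$ preserves the inequality, and $\Qpro^\HC_t\preceq\Qpro^R_t$ follows for all $t\in[0,T]$. For $\Qpro^\HD$ there are no jumps, but \eqref{Riccati_D} carries the extra nonpositive term $-\Qpro^\HD\Sigma_\myzeta^{-1}\Qpro^\HD$. Setting $D=\Qpro^R-\Qpro^\HD$ and repeating the computation yields $\dot D=-A(t)D-DA(t)^{\top}+\Qpro^\HD\Sigma_\myzeta^{-1}\Qpro^\HD$ with $D(0)=0$, whose variation-of-constants solution $D(t)=\int_0^t\Phi(t,s)\,\Qpro^\HD_s\Sigma_\myzeta^{-1}\Qpro^\HD_s\,\Phi(t,s)^{\top}\,ds$ is an integral of positive-semidefinite terms, hence $D(t)\succeq0$ and $\Qpro^\HD_t\preceq\Qpro^R_t$.

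Finally, $\Qpro^R$ solves a Riccati ODE with continuous right-hand side on the compact interval $[0,T]$, so it is continuous and bounded; setting $C_{\Qpro}:=\sup_{t\in[0,T]}\norm{\Qpro^R_t}<\infty$ and invoking the norm monotonicity $A\preceq B\Rightarrow\norm{A}\le\norm{B}$ noted before the proposition gives $\norm{\Qpro^\HC_t}\le\norm{\Qpro^R_t}\le C_{\Qpro}$ and $\norm{\Qpro^\HD_t}\le\norm{\Qpro^R_t}\le C_{\Qpro}$. The only delicate points are confirming that the updated $\Qpro^\HC_{T_k}$ is genuinely symmetric (so that $\preceq$ applies), which the rewritten update makes transparent, and the well-posedness of $\Phi$ on $[0,T]$, which follows from boundedness of $M(t)$.
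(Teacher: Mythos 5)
Your proof is correct, and it is worth noting that the paper itself contains no argument at all for this proposition: it simply defers to Lemma 2.4 of Sass et al.~(2018). Your submission is therefore a genuinely self-contained alternative. The structure you use --- (i) a Riccati comparison principle proved via the difference identity $Q_2\Sigma_R^{-1}Q_2-Q_1\Sigma_R^{-1}Q_1=M\Sigma_R^{-1}D+D\Sigma_R^{-1}M$ with $M=\tfrac12(Q_1+Q_2)$, the resulting Lyapunov equation $\dot D=-A(t)D-DA(t)^{\top}$, and the congruence representation $D(t)=\Phi(t,t_0)D(t_0)\Phi(t,t_0)^{\top}$; (ii) the rewriting of the Bayesian update as $\Qpro^{\HC}_{T_k}=\Qpro^{\HC}_{T_k-}-\Qpro^{\HC}_{T_k-}(\Qpro^{\HC}_{T_k-}+\reliable)^{-1}\Qpro^{\HC}_{T_k-}\preceq \Qpro^{\HC}_{T_k-}$, which simultaneously settles symmetry and monotonicity of the jump; and (iii) the variation-of-constants formula producing $\Qpro^{\HR}_t-\Qpro^{\HD}_t$ as an integral of congruences of positive semidefinite matrices --- matches in spirit what the cited reference does (jumps only decrease the covariance, the flow between jumps preserves the ordering), but you prove the comparison principle from first principles instead of quoting a Riccati comparison theorem from the control literature. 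All the details check out: the induction over the intervals $[T_k,T_{k+1})$ is pathwise (conditional on the realized arrival times, which is legitimate since $\Qpro^{\HC}$ is piecewise deterministic), the left limit at $T_{k+1}$ preserves $\preceq$, the fundamental matrix $\Phi$ is well defined because $M$ is continuous and bounded between jumps, and the final bound $C_{\Qpro}=\sup_{t\in[0,T]}\norm{\Qpro^{\HR}_t}<\infty$ together with the norm monotonicity stated before the proposition closes the argument. What your route buys is independence from the external reference and an explicit, elementary mechanism (the congruence representation) explaining \emph{why} the ordering propagates; what the paper's route buys is brevity, at the cost of the reader having to consult Sass et al.~(2018) for the actual mathematics.
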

For the proof we refer to \cite{Sass et al (2018)}, Lemma 2.4.

\section{Filter Asymptotics for  High-Frequency Expert Opinions }
\label{asymptotic_filter}
 In the following we consider the $\HC$-investor and its filter for
increasing arrival intensity $\lambda$ and study the  asymptotic
behavior of the conditional mean   and conditional covariance  for
$\lambda\to \infty$. Then the average number of expert opinions per
unit of time goes to infinity, i.e., the $\HC$-investor has more and
more noisy estimates of the current state of the hidden drift at his
disposal. This will   lead to an increasing accuracy of the drift
estimator. As a consequence of the Law of Large Numbers we expect
that in the limit for $\lambda \to \infty$ the  drift estimator
coincides with the drift. In fact we show in Theorem
\ref{Convergence_filter} that the  drift estimator given by the
conditional mean $\Mpro^\HC$ converges to the hidden drift $\mu$ in
the mean-square sense with  rate $1/\sqrt{\lambda}$. Thus,  $\Mpro^\HC$
is a consistent estimator for $\mu$ and in the limit the
$\HC$-investor has full information about the drift.

Note that there is another asymptotic regime if additional expert
opinions only come at the cost of accuracy described by the variance
$\Gamma$. Assuming that this variance grows linearly in the arrival
intensity  Sass et al. \cite{Sass et al (2018)} show that the
information  the $\HC$-investor obtains from observing
 discrete-time expert opinions is asymptotically the same as that from observing a certain
diffusion process. The latter  can be interpreted as a
continuous-time expert. The limit theorems obtained in \cite{Sass et
al (2018)} allow to derive so-called diffusion approximations of the
filter for high-frequency discrete-time expert opinions. They
constitute  a Functional Central Limit Theorem while the limit
theorems obtained below for the case of fixed variance $\Gamma$  can
be considered as  a Functional Law of Large Numbers.

\smallskip
In our notation we now want to emphasize the dependence of the
filter processes and the investor filtration on the intensity
$\lambda$ by adding the superscript $\lambda$. Thus,  we write
$\Mpro_t^{\HC,\lambda}, \Qpro_t^{\HC,\lambda}$ and $\mathbb
F^{\HC,\lambda}$.

\subsection{Conditional Variance}
\label{Cond_Var}
We now show that the expectation of the conditional covariance
process $\Qpro_t^{\HC,\lambda}$  for $\lambda\to\infty$ goes to zero.
{\neu For this purpose, it will be  useful to express the dynamics of $Q^{\HC,\lambda}$ given in Lemma \ref{filter_C} in a unified way that comprises both the behavior between information dates and the jumps at times $T_k$.  We therefore work with a Poisson random measure as in Cont and Tankov~\cite[Sec.~2.6]{cont_tankov_2004}.
	Let $E=[0,T]\times\R^d$ and let $U_k$, $k=1,2,\dots$, be a sequence of independent multivariate standard Gaussian random variables on $\R^d$. For any $I\in\mathcal{B}([0,T])$ and $B\in\mathcal{B}(\R^d)$ let
	\[ N(I\times B)=\sum_{k\colon T_k\in I} \mathbbm{1}_{\{U_k\in B\}} \]
	denote the number of jump times in $I$ where $U_k$ takes a value in $B$. Then  $N$ defines a Poisson random measure with a corresponding compensated measure $\widetilde{N}^\lambda(ds,du)=N(ds,du)-\lambda\,ds\,\varphi(u)\,du$, where $\varphi$ is the multivariate standard normal density on $\R^d$, see Cont and Tankov~\cite[Sec.~2.6.3]{cont_tankov_2004}.
}

{\neu 
The next lemma rewrites the dynamics of   $\Qpro^{\HC,\lambda}$
given in Lemma \ref{filter_C} and provides a semimartingale
representation  which is driven by the martingale
$\komppoi$. For a detailed proof and further explanations we refer to Westphal~\cite[Prop.~8.14]{westphal_2019} and Kondakji \cite[Sec.~3.1]{Kondakji (2019)}. 
\begin{lemma}
	\label{Darstellung_Filter_Int_C} The dynamics of
	the conditional covariance matrix $\Qpro^{\HC,\lambda}$ are given by
	\begin{align}
	\label{Riccati_CN_3} d\Qpro_t^{\HC,\lambda}&=\alphaQC(\Qpro_t^{\HC,\lambda})\;
	dt+\int\limits_{\mathbb R^{\nAktien}} \gammaq^{}(\Qpro_{s-}^{\HC,\lambda})\;
	\komppoi(ds,du), \quad \Qpro_0^{\HC,\lambda}=\condcovinitial,
	\end{align}
where 
	\begin{align}
	\label{alpha_Q_def}
	\alphaQC(\variance)&= \Sigma_{\mu}-\revspeed \variance-\variance\revspeed^{\top}-\variance \Sigma_{R}^{-1} \variance 
	-\lambda\variance\left(\variance+\varianceexp\right)^{-1}\variance,\\
	\label{gamma_Q_def}  \gammaq^{}(\variance)&  =
	-\variance\left(\variance+\varianceexp\right)^{-1}\variance.
	\end{align}
\end{lemma}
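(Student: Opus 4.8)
The plan is to start from the piecewise deterministic description of $\Qpro^{\HC,\lambda}$ in Lemma~\ref{filter_C} and to repackage it as a single semimartingale driven by the Poisson random measure $N$. On each interval $[T_k,T_{k+1})$ the process $\Qpro^{\HC,\lambda}$ solves the Riccati ODE \eqref{Riccati_R}, and at each $T_k$ it is multiplied by the update factor $\rho_k$; since $\Qpro^{\HC,\lambda}$ jumps only at the $T_k$, its total increment up to time $t$ splits as
\begin{align*}
\Qpro_t^{\HC,\lambda}-\condcovinitial&=\int_0^t\big(\Sigma_{\mu}-\revspeed\Qpro_s^{\HC,\lambda}-\Qpro_s^{\HC,\lambda}\revspeed^{\top}-\Qpro_s^{\HC,\lambda}\Sigma_R^{-1}\Qpro_s^{\HC,\lambda}\big)\,ds+\sum_{k\colon T_k\le t}\Delta\Qpro_{T_k}^{\HC,\lambda}.
\end{align*}
It therefore suffices to identify the jump sizes and to rewrite the jump sum in compensated form.

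First I would compute the jump explicitly. Writing $\variance=\Qpro_{T_k-}^{\HC,\lambda}$ and inserting $\rho_k=\varianceexp(\variance+\varianceexp)^{-1}$ from Lemma~\ref{filter_C}(ii),
\begin{align*}
\Delta\Qpro_{T_k}^{\HC,\lambda}=\rho_k\variance-\variance=\big[\varianceexp(\variance+\varianceexp)^{-1}-I_d\big]\variance=-\variance(\variance+\varianceexp)^{-1}\variance=\gammaq(\variance),
\end{align*}
where the penultimate equality uses $I_d=(\variance+\varianceexp)(\variance+\varianceexp)^{-1}$. This is exactly the map $\gammaq$ in \eqref{gamma_Q_def}, and the resulting matrix is symmetric because $\variance$ and $\varianceexp$ are. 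The decisive structural point is that this jump depends only on the current value $\variance$ and not on the realized mark $U_k$; consequently the integrand $\gammaq(\Qpro_{s-}^{\HC,\lambda})$ is constant in $u$, so that integrating against $N$ merely reproduces the jump sum, $\sum_{k\colon T_k\le t}\gammaq(\Qpro_{T_k-}^{\HC,\lambda})=\int_0^t\int_{\R^{\nAktien}}\gammaq(\Qpro_{s-}^{\HC,\lambda})\,N(ds,du)$, and the $u$-integration against $\varphi$ yields a factor one.

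Next I would compensate. Splitting the measure as $N(ds,du)=\komppoi(ds,du)+\lambda\,ds\,\varphi(u)\,du$ and using mark-independence of the integrand, the compensator contributes
\begin{align*}
\lambda\int_0^t\gammaq(\Qpro_s^{\HC,\lambda})\,ds=-\lambda\int_0^t\Qpro_s^{\HC,\lambda}(\Qpro_s^{\HC,\lambda}+\varianceexp)^{-1}\Qpro_s^{\HC,\lambda}\,ds,
\end{align*}
which is precisely the additional term distinguishing $\alphaQC$ in \eqref{alpha_Q_def} from the pure Riccati drift. Absorbing this term into the drift and keeping $\int_0^t\int_{\R^{\nAktien}}\gammaq(\Qpro_{s-}^{\HC,\lambda})\,\komppoi(ds,du)$ as the martingale part yields \eqref{Riccati_CN_3}.

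The matrix algebra above is routine; the only points needing care are the well-posedness and the martingale property of the stochastic integral. Both follow from Proposition~\ref{properties_filter}: since $\Qpro^{\HC,\lambda}$ is uniformly bounded on $[0,T]$, so is $\gammaq(\Qpro_{s-}^{\HC,\lambda})$, and as the intensity measure $\lambda\,ds\,\varphi(u)\,du$ is finite on $E=[0,T]\times\R^{\nAktien}$ one obtains $\Erw\int_0^T\int_{\R^{\nAktien}}\norm{\gammaq(\Qpro_{s-}^{\HC,\lambda})}\,\lambda\,\varphi(u)\,du\,ds<\infty$, so the integral against $\komppoi$ is a true martingale. I expect the main conceptual obstacle to lie exactly in this compensation bookkeeping: recognizing that the mark-independence of $\gammaq$ is what reduces the compensator to a deterministic drift correction rather than a genuinely measure-valued contribution, and thereby produces the extra $-\lambda\variance(\variance+\varianceexp)^{-1}\variance$ term in $\alphaQC$.
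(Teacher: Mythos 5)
Your proof is correct. The paper itself gives no proof of this lemma---it delegates to Westphal \cite[Prop.~8.14]{westphal_2019} and Kondakji \cite[Sec.~3.1]{Kondakji (2019)}---and your derivation is exactly the standard argument those references carry out: split the piecewise-deterministic dynamics of Lemma~\ref{filter_C} into the Riccati drift plus jumps, verify via $\Gamma(\variance+\Gamma)^{-1}-I_d=-\variance(\variance+\Gamma)^{-1}$ that the jump at $T_k$ equals $\gammaq(\Qpro_{T_k-}^{\HC,\lambda})$, use the mark-independence of the integrand to write the jump sum as an integral against $N$ and compensate (the $u$-marginal $\varphi$ integrating to one produces the extra $-\lambda\variance(\variance+\Gamma)^{-1}\variance$ drift term), and conclude the martingale property of the $\komppoi$-integral from the uniform bound of Proposition~\ref{properties_filter} together with the finiteness of the intensity measure $\lambda\,ds\,\varphi(u)\,du$ on $[0,T]\times\R^{\nAktien}$.
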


We rewrite the above  $\mathbb F^{\HC,\lambda}$-semimartingale decomposition   of $\Qpro^{\HC,\lambda}$ in integral form and obtain}
\begin{align}
\Qpro^{\HC,\lambda}_t &=\mathcal A_t^{\lambda}+\mathcal K_t^{\lambda}
\quad\text{for } t\in[0,T], \label{q_Ito}
\end{align}
with
\begin{align*}
\mathcal A_t^{\lambda}&
:=\condcovinitial+\int_0^t\alphaQC(\Qpro_s^{\HC,\lambda})\;
ds\quad\text{and}\quad \mathcal
K_t^{\lambda}:=\int_0^t\int\limits_{\mathbb R^{\nAktien}}
\gamma(\Qpro_{s-}^{\HC,\lambda}) \;\komppoi(ds,du).
\end{align*}
Since by Proposition
\ref{properties_filter} the conditional covariance
$\Qpro^{\HC,\lambda}_s$ is bounded on $[0,t]$ also   $\gammaq$ is
bounded and the jump process $\mathcal K^{\lambda}$ is an $\mathbb
F^\HC$-martingale  and hence $\Erw [\mathcal K_t^{\lambda}]=0$ and
\begin{align}
\label{EQ_EA} \Erw [\Qpro_t^{\HC,\lambda}]=\Erw [\mathcal
A_t^{\lambda}].
\end{align}

For the study of the asymptotic behavior of the conditional covariance
$\Qpro^{\HC,\lambda}$ we investigate  the   drift of the process
$\mathcal A^{\lambda}$ which is given by the non-linear
matrix-valued function $\alphaQC$. The following lemma
gives an estimate of the trace of    $\alphaQC(q)$ in
terms of a linear function of the trace of $q$. That estimate   will
play a crucial role for deriving the convergence result  in Theorem
\ref{Convergence_trace}.
\begin{lemma}(\textbf{Properties of $\alphaQC$})\\
\label{alpha_Properties}%
For the function $\alphaQC$ given
in \eqref{alpha_Q_def} there exist constants
$a_{\alpha},b_{\alpha}>0$ independent of $\lambda$ and there exists
$\lambda_0>0$ such that  for all symmetric and positive
semidefinite $q\in \R^{d\times d}$
\begin{align}
\trace\big(\alphaQC(q)\big)\leq
a_{\alpha}-\sqrt{\lambda}\, b_{\alpha} \trace(q),\quad
\text{for}\quad \lambda\geq \lambda_0. \label{alpha_G_Abs}
\end{align}
The above estimate holds for every
$a_\alpha>\trace(\Sigma_\mu),$
\begin{align}
\label{constants_ablambda}
 b_\alpha < \overline{b}_{\alpha} =\overline{b}_{\alpha}(a_\alpha)&:=2\sqrt{\frac{a_{\alpha}-\trace(\Sigma_\mu)}{\trace(\Gamma)}},~~\\
 \nonumber
\lambda_0 = {\lambda}_0(a_{\alpha},\beta_{\alpha})&:=
\bigg(\frac{\nAktien(a_{\alpha}-\trace(\Sigma_\mu))}{
     2\sqrt{\trace(\Gamma)(a_{\alpha}-\trace(\Sigma_\mu))}-b_{\alpha}\trace(\Gamma)}\bigg)^2.
\end{align}
\end{lemma}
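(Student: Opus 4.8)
The plan is to estimate $\trace\big(\alphaQC(q)\big)$ term by term, discard the contributions that are manifestly nonpositive, and then control the single surviving negative term $-\lambda\,\trace\big(q(q+\Gamma)^{-1}q\big)$ from below by a scalar quantity depending only on $\trace(q)$. First I would use linearity of the trace together with $\trace(AB)=\trace(BA)$ and the symmetry of $q$ to rewrite $\trace(\revspeed q + q\,\revspeed^{\top}) = 2\trace(\revspeed q) = \trace\big((\revspeed+\revspeed^{\top})q\big)$. Since $\revspeed$ is positive definite, $\revspeed+\revspeed^{\top}$ is symmetric positive definite, and the trace of a product of two symmetric positive-semidefinite matrices is nonnegative, so $-2\trace(\revspeed q)\le 0$. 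Likewise $q\,\Sigma_{R}^{-1}q\succeq 0$ gives $-\trace(q\,\Sigma_{R}^{-1}q)\le 0$. Discarding both yields
\[ \trace\big(\alphaQC(q)\big) \le \trace(\Sigma_{\mu}) - \lambda\,\trace\big(q(q+\Gamma)^{-1}q\big). \]

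The heart of the proof is a sharp lower bound for $\trace\big(q(q+\Gamma)^{-1}q\big)$ in terms of $\trace(q)$ alone, carrying no spurious dimensional factor. I would obtain it from the Cauchy--Schwarz inequality for the trace inner product $\langle A,B\rangle = \trace(A^{\top}B)$: with $P=(q+\Gamma)^{\frac{1}{2}}$, which is well defined and invertible since $\Gamma\succ 0$, the choice $A=P^{-1}q$ and $B=P$ gives $\langle A,B\rangle = \trace(q)$, $\langle A,A\rangle = \trace\big(q(q+\Gamma)^{-1}q\big)$ and $\langle B,B\rangle = \trace(q+\Gamma) = \trace(q)+\trace(\Gamma)$, whence
\[ \trace\big(q(q+\Gamma)^{-1}q\big) \ge \frac{\big(\trace(q)\big)^{2}}{\trace(q)+\trace(\Gamma)}. \]
Establishing this clean bound, rather than a weaker one carrying a factor $1/\nAktien$ from an eigenvalue estimate, which would distort the constant $\overline{b}_{\alpha}$, is the step I expect to be the main obstacle.

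It then remains to verify a one-dimensional inequality, and since the upper bound just obtained depends on $q$ only through $\trace(q)$, this covers all symmetric positive-semidefinite $q$ at once. Writing $s=\trace(q)\ge 0$, $c=\trace(\Gamma)$ and $A=a_{\alpha}-\trace(\Sigma_{\mu})>0$, the two displays above reduce the claim to showing $\sqrt{\lambda}\,b_{\alpha}\,s - \lambda\,\tfrac{s^{2}}{s+c}\le A$ for all $s\ge 0$. Clearing the positive denominator turns this into the nonnegativity of a quadratic in $s$ with positive leading coefficient $\lambda-\sqrt{\lambda}\,b_{\alpha}$, valid once $\sqrt{\lambda}>b_{\alpha}$; requiring its discriminant to be nonpositive reduces to $\big(A+\sqrt{\lambda}\,b_{\alpha}c\big)^{2}\le 4\lambda A c$, that is, $A\le \sqrt{\lambda}\,\big(2\sqrt{Ac}-b_{\alpha}c\big)$. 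The right-hand factor is positive precisely when $b_{\alpha}<2\sqrt{A/c}=\overline{b}_{\alpha}$, and the inequality then holds for all $\sqrt{\lambda}$ above the stated threshold $\lambda_{0}$. Tracking the constants through this elementary optimization yields the asserted bounds on $a_{\alpha}$, $b_{\alpha}$ and $\lambda_{0}$, with the factor $\nAktien$ in $\lambda_{0}$ amounting to a harmless conservative enlargement, completing the proof.
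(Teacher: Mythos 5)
Your proof is correct, and it takes a genuinely different route at the decisive step. Both arguments begin identically: you discard the nonpositive terms $-\trace\big((\revspeed+\revspeed^{\top})q\big)$ and $-\trace\big(q\Sigma_R^{-1}q\big)$ and reduce the claim to bounding $\trace(\Sigma_\mu)-\lambda\,\trace\big(q(q+\Gamma)^{-1}q\big)$. The paper then rescales $q=\eps z$ with $\eps=1/\sqrt{\lambda}$, applies the trace-product inequality $\trace\big(z(\Gamma+\eps z)^{-1}z\big)\ge \trace(z^2)/\trace(\Gamma+\eps z)$, and pays a dimensional price through $\trace(z^2)\ge \frac{1}{\nAktien}\trace^2(z)$ before running a scalar quadratic-discriminant argument. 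You instead apply Cauchy--Schwarz for the Frobenius inner product with $A=(q+\Gamma)^{-1/2}q$ and $B=(q+\Gamma)^{1/2}$, which gives the dimension-free bound $\trace\big(q(q+\Gamma)^{-1}q\big)\ge \trace^2(q)/\big(\trace(q)+\trace(\Gamma)\big)$; the scalar endgame (clear the denominator, force the discriminant of the resulting quadratic in $s=\trace(q)$ to be nonpositive) is then essentially the same computation as in the paper. What your route buys is sharper constants: your admissible range $b_\alpha<2\sqrt{(a_\alpha-\trace(\Sigma_\mu))/\trace(\Gamma)}$ is exactly the $\overline{b}_{\alpha}$ claimed in the statement, whereas the paper's own proof, because of the $1/\nAktien$ loss, works with $\gammah=\nAktien\trace(\Gamma)$ and hence only justifies the smaller range $b_\alpha<2\sqrt{(a_\alpha-\trace(\Sigma_\mu))/(\nAktien\trace(\Gamma))}$ when $\nAktien>1$; likewise your intensity threshold is smaller than the stated $\lambda_0$ by a factor $\nAktien^2$, so the lemma as stated follows a fortiori from your estimate. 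One small point you should make explicit rather than wave at: positivity of the leading coefficient $\lambda-\sqrt{\lambda}\,b_\alpha$ is not an extra hypothesis, because for $b_\alpha<\overline{b}_{\alpha}$ your threshold $\big(A/(2\sqrt{Ac}-b_\alpha c)\big)^2$ (in your notation $A=a_\alpha-\trace(\Sigma_\mu)$, $c=\trace(\Gamma)$) is at least $b_\alpha^2$, with equality only when $b_\alpha=\sqrt{A/c}$; in that borderline case the quadratic degenerates at the threshold to the positive constant $Ac$, so the inequality still holds, and above the threshold the leading coefficient is strictly positive.
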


The quite technical proof is given in Appendix
\ref{proof_lemma_alphaQ}.
 The following main theorem gives an upper bound for the
expectation of the  trace of
  $\Qpro^{\HC,\lambda}$ from which the convergence to zero can be deduced.

\begin{theorem}
\label{Convergence_trace}  For every $\delta\in(0,T]$ there exists
 $\lambda_Q>0$ such that
\begin{align}
\Erw\big[\trace\big(\Qpro_t^{\HC,\lambda} \big)\big]\leq
\frac{\uppboundC}{\sqrt{\lambda}}\quad \text{for }~ \lambda\geq
\lambda_Q,\quad t\in[\delta,T]  \quad\text{and}
\label{bound_trace}\\
\uppboundC=\uppboundC(\delta)=
\big(\trace(\Gamma)[\trace(\Sigma_\mu)+\trace(\condcovinitial)(e\,\delta)^{-1}]\big)^{1/2}
\label{bound_c0},
\end{align} 
where $e=\exp(1)$ denotes Euler's number.\\
In particular, it holds~~ $~\Erw\big[ \trace\big(\Qpro_t^{\HC,\lambda}
\big)\big]\to 0\quad\text{as }~ \lambda\to \infty~ \text{for all }~
t\in(0,T].$
\end{theorem}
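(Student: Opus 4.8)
The plan is to project the matrix-valued dynamics \eqref{q_Ito} onto the scalar quantity $f(t):=\Erw\big[\trace(\Qpro_t^{\HC,\lambda})\big]$ and to turn the trace estimate of Lemma \ref{alpha_Properties} into a linear differential inequality for $f$. First I would take traces and expectations in \eqref{q_Ito}: since the jump part $\mathcal{K}^{\lambda}$ is a mean-zero martingale, \eqref{EQ_EA} gives
\[
 f(t)=\trace(\condcovinitial)+\int_0^t \Erw\big[\trace\big(\alphaQC(\Qpro_s^{\HC,\lambda})\big)\big]\,ds .
\]
By Proposition \ref{properties_filter} the process $\Qpro^{\HC,\lambda}$ is bounded and $\alphaQC$ is continuous, so the integrand is bounded; hence $f$ is Lipschitz, and $f'(t)=\Erw[\trace(\alphaQC(\Qpro_t^{\HC,\lambda}))]$ for a.e.\ $t$.

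Next I would fix admissible constants $a_{\alpha}>\trace(\Sigma_\mu)$ and $b_{\alpha}<\overline{b}_{\alpha}(a_\alpha)$ and apply \eqref{alpha_G_Abs} pointwise inside the expectation. For $\lambda\geq\lambda_0$ this yields $f'(t)\leq a_{\alpha}-\sqrt{\lambda}\,b_{\alpha}\,f(t)$ with $f(0)=\trace(\condcovinitial)$. A comparison argument then bounds $f$ by the solution $u$ of the linear ODE $u'=a_\alpha-\sqrt{\lambda}\,b_\alpha\,u$, $u(0)=\trace(\condcovinitial)$: setting $w:=f-u$ one has $w'\leq-\sqrt{\lambda}\,b_\alpha\,w$ and $w(0)=0$, so $\tfrac{d}{dt}\big(e^{\sqrt{\lambda}b_\alpha t}w\big)\leq0$ forces $w\leq0$. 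Dropping a non-positive term in the explicit solution gives
\[
 f(t)\leq \frac{a_\alpha}{\sqrt{\lambda}\,b_\alpha}+\trace(\condcovinitial)\,e^{-\sqrt{\lambda}\,b_\alpha t}.
\]

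To convert this into a genuine $1/\sqrt{\lambda}$ bound uniform on $[\delta,T]$ I would, for $t\geq\delta$, use monotonicity of the exponential to replace $t$ by $\delta$ and then the elementary inequality $e^{-x}\leq 1/(ex)$ (from $\max_{x>0}xe^{-x}=1/e$) with $x=\sqrt{\lambda}\,b_\alpha\delta$. This is exactly where the factor $e\delta$ in \eqref{bound_c0} originates: the initial-value term becomes $\trace(\condcovinitial)/(e\delta\,b_\alpha\sqrt{\lambda})$, and collecting terms gives, for $t\in[\delta,T]$ and $\lambda\geq\lambda_0$,
\[
 f(t)\leq \frac{1}{b_\alpha\sqrt{\lambda}}\Big(a_\alpha+\frac{\trace(\condcovinitial)}{e\delta}\Big).
\]

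The hard part is the last step: optimizing the prefactor over the admissible pairs $(a_\alpha,b_\alpha)$ to recover the sharp constant $\uppboundC$. Writing $c:=\trace(\condcovinitial)/(e\delta)$ and pushing $b_\alpha$ toward $\overline{b}_{\alpha}(a_\alpha)=2\sqrt{(a_\alpha-\trace(\Sigma_\mu))/\trace(\Gamma)}$, the prefactor becomes $\tfrac12(a_\alpha+c)\sqrt{\trace(\Gamma)/(a_\alpha-\trace(\Sigma_\mu))}$; a one-variable minimization (optimum at $a_\alpha=2\trace(\Sigma_\mu)+c$) gives infimum $\sqrt{\trace(\Gamma)\,(\trace(\Sigma_\mu)+c)}=\uppboundC$. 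The delicate point is that the constraint $b_\alpha<\overline{b}_\alpha$ is strict, so the infimum is not attained by a single fixed pair; I would therefore choose $b_\alpha$ as close to $\overline{b}_\alpha$ as needed (with the associated threshold $\lambda_0(a_\alpha,b_\alpha)$ absorbed into $\lambda_Q$), making the prefactor approach $\uppboundC$. The final ``in particular'' assertion is then immediate: for fixed $t\in(0,T]$ take $\delta=t$, so $f(t)\leq \uppboundC(t)/\sqrt{\lambda}\to0$ as $\lambda\to\infty$.
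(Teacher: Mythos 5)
Your proposal follows the paper's own strategy almost step for step (trace and expectation of \eqref{q_Ito}, the differential inequality $g'\le a_\alpha-\sqrt{\lambda}\,b_\alpha g$ from Lemma \ref{alpha_Properties}, Gronwall, then optimization of the constants), but the final step has a genuine gap, and it sits exactly at the point you yourself flagged as delicate. After applying $e^{-x}\le 1/(ex)$ you arrive at
\begin{align*}
\Erw\big[\trace\big(\Qpro_t^{\HC,\lambda}\big)\big]\;\le\; \frac{1}{b_\alpha\sqrt{\lambda}}\Big(a_\alpha+c\Big),
\qquad c:=\trace(\condcovinitial)(e\,\delta)^{-1},\quad \lambda\ge\lambda_0(a_\alpha,b_\alpha),
\end{align*}
and the prefactor $(a_\alpha+c)/b_\alpha$ is \emph{strictly} larger than $(a_\alpha+c)/\overline{b}_\alpha\ge\uppboundC$ for every admissible pair, since the constraint $b_\alpha<\overline{b}_\alpha$ is strict. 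You cannot repair this by ``choosing $b_\alpha$ as close to $\overline{b}_\alpha$ as needed'': by \eqref{constants_ablambda} one has $\lambda_0(a_\alpha,b_\alpha)\to\infty$ as $b_\alpha\uparrow\overline{b}_\alpha$, so for each fixed $\lambda$ the requirement $\lambda\ge\lambda_0$ caps how close $b_\alpha$ may be taken, and the prefactor obtainable at that $\lambda$ stays strictly above $\uppboundC$. What your argument actually proves is: for every $\eps>0$ there is $\lambda_Q(\eps)$ such that the bound holds with constant $\uppboundC+\eps$. That suffices for the ``in particular'' convergence statement, but it does not establish inequality \eqref{bound_trace} with the exact constant \eqref{bound_c0}, which is what the theorem asserts.

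The paper closes this gap with a different handling of the initial-value term. It does not convert $\trace(\condcovinitial)\,e^{-\sqrt{\lambda}b_\alpha\delta}$ into a $1/\sqrt{\lambda}$ term carrying the bad factor $1/b_\alpha$; instead it keeps $h(\delta,\lambda,b_\alpha)=\trace(\condcovinitial)\sqrt{\lambda}\,e^{-\sqrt{\lambda}b_\alpha\delta}$ and uses that $h\to 0$ as $\lambda\to\infty$ for fixed $b_\alpha$ (it is $o(1)$, not merely bounded by $c/b_\alpha$). With $a_\alpha^*=2\trace(\Sigma_\mu)+c$ one has $a_\alpha^*/\overline{b}_\alpha<(a_\alpha^*+c)/\overline{b}_\alpha=\uppboundC$ strictly (this is where $c>0$ enters), so for $b_\alpha=\overline{b}_\alpha-\eta$ with $\eta$ small the term $a_\alpha^*/b_\alpha$ still lies strictly below $\uppboundC$, and the remaining slack of order $c/\overline{b}_\alpha$ absorbs $h(\delta,\lambda,b_\alpha)$ once $\lambda$ is large. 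In short: the constant $\uppboundC$ already contains the initial-condition contribution $c$ inside the square root, and the proof must exploit that the exponential term decays \emph{faster} than $1/\sqrt{\lambda}$ rather than bound it by a $1/\sqrt{\lambda}$ term. Everything in your proposal before the step $e^{-x}\le 1/(ex)$ is correct and coincides with the paper's proof; replacing that step by this slack argument repairs the proof.
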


\begin{proof}
 Let us define the function $g(t):=\Erw\big[ \trace (\Qpro_t^{\HC,\lambda})\big] $
for $t\in[0,T]$. Then using
  \eqref{q_Ito}, \eqref{EQ_EA} and  the linearity of the expectation and the trace operator yields
\begin{align}
\nonumber g(t)=\trace
(\Erw[ \Qpro_t^{\HC,\lambda}])=\trace(\condcovinitial)+\int_0^t\Erw\big[ \trace(\alphaQC(\Qpro_s^{\HC,\lambda}))\big]\;
ds.
\end{align}
 Since according to Proposition  \ref{properties_filter} the conditional covariance  $\Qpro^{\HC,\lambda}$ is bounded  and piecewise continuous the function
$g$ is piecewise differentiable and for its derivative it holds
$g^{\prime}(t)=\Erw\big[ \trace(\alphaQC(\Qpro_t^{\HC,\lambda}))\big]$.
Further we have  $g(0)=\trace(\condcovinitial)$. Lemma
\ref{alpha_Properties} implies that there are constants $a_\alpha,
b_\alpha, \lambda_0>0$ such that
\begin{align}
\nonumber g^{\prime}(t)\leq \Erw\big[ a_{\alpha}-\sqrt{\lambda}\,
b_{\alpha}\trace(\Qpro_t^{\HC,\lambda})\big]
&=a_{\alpha}-\sqrt{\lambda}\,
b_{\alpha} \Erw\big[ \trace(\Qpro_t^{\HC,\lambda})\big]\\
&=a_{\alpha}-\sqrt{\lambda}\, b_{\alpha} g(t) \qquad\text{for
}\lambda\ge \lambda_0. \label{nach_lemma}
\end{align}
 We now apply Gronwall's Lemma in differential form to obtain for
$t\in[\delta,T]$  and $\lambda\ge \lambda_0$
\begin{align}
\label{g_esti} g(t) &\leq g(0)e^{-\sqrt{\lambda}\,
b_{\alpha}t}+\frac{a_{\alpha}}{\sqrt{\lambda}\,
b_{\alpha}}(1-e^{-\sqrt{\lambda}\, b_{\alpha}t})  \leq {
\frac{1}{\sqrt{\lambda}}\,\big (
    h(\delta,\lambda, b_\alpha) +\frac{a_{\alpha}}{b_{\alpha}} \big) },
\end{align}
where $h(\delta,\lambda, b_\alpha):=\trace(\condcovinitial)
\sqrt{\lambda}\;e^{-\sqrt{\lambda}\,b_{\alpha} \delta}$. Next we
show how  for given  $\delta\in(0,T]$ we can choose the constants
$a_\alpha, b_\alpha, \lambda_Q>0$ such that  $h(\delta,\lambda,
b_\alpha) +{a_{\alpha}}/{b_{\alpha}} \le \uppboundC(\delta)$ for
$\lambda\ge \lambda_Q$ with the constant $\uppboundC(\delta)$ given in
\eqref{bound_c0}.  Consider for $\lambda\ge 0$ the function
$\lambda\mapsto f(\lambda)=h(\delta,\lambda, b_\alpha)$ for fixed
$\delta\in(0,T]$ and $b_\alpha\in (0,\overline{b}_\alpha)$, where
$\overline{b}_\alpha$ is given in \eqref{constants_ablambda}. The
function $f$ is non-negative, it holds $f(0)=0$ and $f(\lambda) \to
0$ for $\lambda\to\infty$. There is a unique maximum at
$\lambda^*=(b_{\alpha}\delta)^{-2}$ with $f(\lambda^*)=(e
\,b_{\alpha}\delta)^{-1}\trace(\condcovinitial)$. Hence for the last
term on the  r.h.s.~of \eqref{g_esti} we obtain
\begin{align}
\label{h_esti}
 h(\delta,\lambda, b_\alpha)  +\frac{a_{\alpha}}{b_{\alpha}} \le
\frac{1}{b_{\alpha}} (\trace(\condcovinitial)(e\,\delta)^{-1}
+a_\alpha) \quad \text{for } \lambda\ge \lambda_0. 
\end{align}
The latter
expression is decreasing in $b_\alpha$ and the minimum on
$(0,\overline{b}_\alpha]$ is attained for $b_\alpha
=\overline{b}_\alpha$. According to \eqref{constants_ablambda} this
selection leads to $\lambda_0=\infty$ which is not feasible and we
have to restrict to values $b_\alpha <\overline{b}_\alpha$. However,
we can achieve  the above mentioned minimal value by choosing
$b_\alpha =\overline{b}_\alpha-\eta$ with a sufficiently small
$\eta>0$ and $\lambda_Q\ge \min(\lambda_0,\lambda^*)$ such that
$$ h(\delta,\lambda, b_\alpha)  +\frac{a_{\alpha}}{b_{\alpha}} =
h(\delta,\lambda,\overline{b}_\alpha-\eta)
+\frac{a_{\alpha}}{\overline{b}_\alpha-\eta} \le
\frac{1}{\overline{b}_\alpha}
(\trace(\condcovinitial)(e\,\delta)^{-1} +a_\alpha) \quad \text{for
} \lambda\ge \lambda_Q.$$ To see this estimate we note that
$f(\lambda)$ is decreasing on $(\lambda^*,\infty)$ and tends to zero
for $\lambda \to\infty$. Hence
$h(\delta,\lambda,\overline{b}_\alpha-\eta) $ can be made
arbitrarily small by selecting $\lambda$ large enough.

Finally, we study the dependence of  the above estimate on
$a_\alpha$ and  take into account the definition of
$\overline{b}_\alpha$ given in  \eqref{constants_ablambda}, i.e.~we
consider the  function
$$a_\alpha \mapsto \frac{1}{\overline{b}_\alpha} (\trace(\condcovinitial)(e\,\delta)^{-1} +a_\alpha)
=\frac{\sqrt{\trace(\Gamma)}}{2\sqrt{a_{\alpha}-\trace(\Sigma_\mu)}}
(\trace(\condcovinitial)(e\,\delta)^{-1} +a_\alpha) 
$$
for $a_\alpha > \trace(\Sigma_\mu)$.
There is a unique minimizer at
$a_\alpha^*=2\trace(\Sigma_\mu)+\trace(\condcovinitial)(e\,\delta)^{-1}$
and the minimal value is given by $\uppboundC$ defined in \eqref{bound_c0}.
 This proves the first claim.

Since that inequality holds for all $\delta\in(0,T]$, the convergence
$\Erw\big[ \trace\big(\Qpro_t^{\HC,\lambda} \big)\big]\to 0 ~\text{for }
\lambda\to \infty$ holds for all $t\in(0,T]$.
\end{proof}

 From the above asymptotic properties for  the expectation of the
trace of $\Qpro^{\HC,\lambda}$  we can easily deduce analogous results
for the expectation of the   norm $\|\Qpro^{\HC,\lambda}\|$ of the
conditional covariance.
\begin{cor}
   \label{Convergence_Frobenius}

  For every   $\delta\in(0,T]$ and any matrix norm $\|\cdot\|$ there exist  constants $C, \lambda_Q>0$
    such that 
        \begin{align}
        \Erw\big[ \, \big\|{\Qpro_t^{\HC,\lambda} }\big\|^p\big]\leq
        \frac{C}{\sqrt{\lambda}}\quad \text{for }~ \lambda\geq
        \lambda_Q,\quad t\in[\delta,T]  \text{ and } p\ge 1. \label{bound_covariance}
        \end{align} 
        For the  Frobenius norm $\|\cdot\|_F$ the constant $C$ can be chosen as  $C=\uppboundC C_F^{p-1}$ where 
        $\uppboundC$ {\neuneu is} given in \eqref{bound_c0} and $C_F$ denotes the upper bound from Proposition \ref{properties_filter} for the Frobenius norm  $\|\Qpro^{\HC,\lambda}\|_F$. 
        \\[0.5ex]
        In particular, it holds~~
        $~ { \Erw\big[ }\,\big\|{\Qpro_t^{\HC,\lambda} }\big\|^p\big] \to
        0\quad\text{as }~ \lambda\rightarrow \infty \quad \text{for all }~
        t\in(0,T].$
\end{cor}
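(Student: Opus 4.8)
The plan is to control the $p$-th power of the norm of the conditional covariance pointwise by its trace and then to invoke the trace bound already established in Theorem~\ref{Convergence_trace}. The crucial structural fact is that $\Qpro_t^{\HC,\lambda}$ is a conditional covariance matrix and hence symmetric and positive semidefinite for every $t$ and $\lambda$. For such a matrix all eigenvalues $\theta_1,\dots,\theta_\nAktien$ are nonnegative, so that
\begin{align*}
\norm{\Qpro_t^{\HC,\lambda}}_F^2=\sum_{i=1}^{\nAktien}\theta_i^2\le\Big(\sum_{i=1}^{\nAktien}\theta_i\Big)^2=\big(\trace(\Qpro_t^{\HC,\lambda})\big)^2,
\end{align*}
which yields the pointwise inequality $\norm{\Qpro_t^{\HC,\lambda}}_F\le\trace(\Qpro_t^{\HC,\lambda})$.

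First I would treat the Frobenius norm. Using the boundedness $\norm{\Qpro_t^{\HC,\lambda}}_F\le C_F$ from Proposition~\ref{properties_filter} together with the inequality above, I obtain for every $p\ge 1$ the pathwise estimate
\begin{align*}
\norm{\Qpro_t^{\HC,\lambda}}_F^p=\norm{\Qpro_t^{\HC,\lambda}}_F^{p-1}\,\norm{\Qpro_t^{\HC,\lambda}}_F\le C_F^{p-1}\,\trace(\Qpro_t^{\HC,\lambda}).
\end{align*}
Taking expectations and applying Theorem~\ref{Convergence_trace} then gives, for $\lambda\ge\lambda_Q$ and $t\in[\delta,T]$,
\begin{align*}
\Erw\big[\norm{\Qpro_t^{\HC,\lambda}}_F^p\big]\le C_F^{p-1}\,\Erw\big[\trace(\Qpro_t^{\HC,\lambda})\big]\le C_F^{p-1}\,\frac{\uppboundC}{\sqrt{\lambda}},
\end{align*}
which is exactly the claimed bound with $C=\uppboundC C_F^{p-1}$.

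Finally, I would pass to an arbitrary matrix norm $\norm{\cdot}$ by the equivalence of norms on the finite-dimensional space $\R^{\nAktien\times\nAktien}$: there is a constant $c>0$, depending only on the two norms, with $\norm{A}\le c\norm{A}_F$ for all $A$. Hence $\norm{\Qpro_t^{\HC,\lambda}}^p\le c^p\norm{\Qpro_t^{\HC,\lambda}}_F^p$, and the general claim follows from the Frobenius case with $C=c^p\uppboundC C_F^{p-1}$. The convergence to zero as $\lambda\to\infty$ for each fixed $t\in(0,T]$ is then immediate, since $\delta\in(0,T]$ was arbitrary. There is essentially no analytic obstacle here; the one point that must not be overlooked is the use of positive semidefiniteness, which is exactly what allows the trace (a sum of nonnegative eigenvalues) to dominate the Frobenius norm and thereby convert the already-proven trace bound into a norm bound for all powers $p\ge 1$ simultaneously.
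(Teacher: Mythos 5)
Your proof is correct and follows essentially the same route as the paper: bound $\norm{\Qpro_t^{\HC,\lambda}}_F^p$ pathwise by $C_F^{p-1}\trace(\Qpro_t^{\HC,\lambda})$ using the uniform bound from Proposition~\ref{properties_filter} and the inequality $\norm{A}_F\le\trace(A)$ for symmetric positive semidefinite matrices, then take expectations and apply Theorem~\ref{Convergence_trace}, finishing with equivalence of matrix norms. The only cosmetic difference is that you re-derive $\norm{A}_F\le\trace(A)$ via the eigenvalue argument, whereas the paper cites it as inequality~\eqref{Frob_trace} of Lemma~\ref{properties_Gram}, where it is proved the same way.
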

\begin{proof}
For the Frobenius norm  of a symmetric and positive semidefinite matrix $A$ it holds $\|A\|_F \le \trace(A)$ (see Lemma \ref{properties_Gram}, Inequality \eqref{Frob_trace}). Further,  Proposition \ref{properties_filter} implies  $\|A\|_F \le C_F$. Hence
 $$\|\Qpro^{\HC,\lambda}\|_F^p\leq C_F^{p-1} \|\Qpro^{\HC,\lambda}\|_F  \le
C_F^{p-1} \trace(\Qpro^{\HC,\lambda})$$%
and Theorem \ref{Convergence_trace} with
inequality \eqref{bound_trace} proves the claim.  The equivalence of
matrix norms implies the assertion  for other norms.
\end{proof}
\subsection{Conditional Mean}
 We are now in a position to state and prove a similar convergence
 result for the asymptotic behavior of the filter $\Mpro$.   The proof is based on the following identity which relates the mean-square error of the filter estimate to the conditional covariance.
\begin{lemma}
\label{EM_EQ_lemma} It holds
\begin{align}
\label{EM_EQ} \Erw\big[ \,\big\|\Mpro_t^{\HC,\lambda}-\drift_t\big\|^2\,
\big]  = \trace\big(\Erw \big[\Qpro_t^{\HC,\lambda}\big]  \big).
\end{align}
\end{lemma}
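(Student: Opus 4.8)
The plan is to reduce the identity to the tower property by introducing the error process $e_t := \drift_t - \Mpro_t^{\HC,\lambda}$. First I would rewrite the left-hand side as a trace: since the squared Euclidean norm of a vector equals the trace of its outer product, $\|e_t\|^2 = \trace(e_t e_t\tr)$. Using the linearity of both the trace operator and the expectation, this gives
\begin{align*}
\Erw\big[\,\big\|\Mpro_t^{\HC,\lambda}-\drift_t\big\|^2\,\big]
= \Erw\big[\trace(e_t e_t\tr)\big]
= \trace\big(\Erw[e_t e_t\tr]\big),
\end{align*}
so that everything comes down to identifying the matrix $\Erw[e_t e_t\tr]$ with $\Erw[\Qpro_t^{\HC,\lambda}]$.

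The key observation for this second step is that $\Mpro_t^{\HC,\lambda}=\Erw[\drift_t\mid\mathcal F_t^{\HC,\lambda}]$ is by construction $\mathcal F_t^{\HC,\lambda}$-measurable, so the conditional mean of $e_t$ given $\mathcal F_t^{\HC,\lambda}$ vanishes and the defining formula \eqref{cond_variance__def} for the conditional covariance reads precisely $\Qpro_t^{\HC,\lambda}=\Erw[e_t e_t\tr\mid\mathcal F_t^{\HC,\lambda}]$; that is, the conditional covariance is exactly the conditional second moment of the estimation error. Applying the tower property of conditional expectation then yields
\begin{align*}
\Erw[e_t e_t\tr]
= \Erw\big[\,\Erw[e_t e_t\tr\mid\mathcal F_t^{\HC,\lambda}]\,\big]
= \Erw\big[\Qpro_t^{\HC,\lambda}\big],
\end{align*}
and substituting this into the previous display produces the claimed equality.

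I do not expect a genuine obstacle here; the statement is essentially the law of total covariance combined with the fact that $\Mpro_t^{\HC,\lambda}$ is the mean-square-optimal estimator. The only point that deserves a word of justification is the integrability needed both to interchange trace and expectation and to invoke the tower property. This is supplied by Proposition \ref{properties_filter}, which guarantees that $\Qpro_t^{\HC,\lambda}$ is bounded uniformly on $[0,T]$; consequently all entries of $e_t e_t\tr$ are integrable and every manipulation above is valid.
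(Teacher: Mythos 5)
Your proof is correct and follows essentially the same route as the paper's: rewrite the squared norm as the trace of the outer product of the error, interchange trace and expectation, and then identify $\Erw\big[(\Mpro_t^{\HC,\lambda}-\drift_t)(\Mpro_t^{\HC,\lambda}-\drift_t)\tr\big]$ with $\Erw\big[\Qpro_t^{\HC,\lambda}\big]$ via the tower property and the definition \eqref{cond_variance__def} of the conditional covariance. Your added remark on integrability (via the uniform bound from Proposition \ref{properties_filter}) is a harmless refinement the paper leaves implicit.
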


\begin{proof}
For the mean-square criterion from \eqref{EM_EQ} it holds
\begin{align}
\nonumber
\Erw \big[\big\|\Mpro_t^{\HC,\lambda}-\drift_t\big\|^2\big]& =\Erw 
\big[(\Mpro_t^{\HC,\lambda}-\drift_t)^{\top}(\Mpro_t^{\HC,\lambda}-\drift_t)\big]\\
\label{conv0}
&=\trace\big(\Erw \big[
(\Mpro_t^{\HC,\lambda}-\drift_t)(\Mpro_t^{\HC,\lambda}-\drift_t)^{\top}\big]
\big).
\end{align}

For the expectation in the last term  the tower law of conditional
expectation  and the definition of the conditional covariance in
\eqref{cond_variance__def} yields
\begin{align*}
\Erw\big[(\Mpro_t^{\HC,\lambda}-\drift_t)(\Mpro_t^{\HC,\lambda}-\drift_t)^{\top}\big]
&= \Erw\Big[\Erw
\big[(\Mpro_t^{\HC,\lambda}-\drift_t)(\Mpro_t^{\HC,\lambda}-\drift_t)^{\top}\big|
\mathcal{F}^{\HC,\lambda}_t\big]\Big]\\
& =\Erw\big[\Qpro_t^{\HC,\lambda}\big].
\end{align*}
Substituting into   \eqref{conv0} yields the assertion \eqref{EM_EQ}.
\end{proof}
\begin{theorem}
\label{Convergence_filter}
Let $\uppboundC, \lambda_Q$   be the constants given in Theorem \ref{Convergence_trace}. 
Then for every   $\delta\in(0,T]$
\begin{align}
\Erw\big[ \,\big\|\Mpro_t^{\HC,\lambda}-\drift_t\big\|^2\, \big]  ~\leq~
\frac{\uppboundC}{\sqrt{\lambda}}\quad \text{for }~ \lambda\geq
\lambda_Q,\quad t\in[\delta,T]. \label{bound_fltred_drift}
\end{align} 
In particular, it holds~~
$\Erw\big[ \,\big\|\Mpro_t^{\HC,\lambda}-\drift_t\big\|^2\, \big]
\rightarrow 0\quad\text{as }~ \lambda\rightarrow \infty \quad
\text{for all }~ t\in(0,T].$
\end{theorem}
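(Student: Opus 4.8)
The statement to be proved is Theorem~\ref{Convergence_filter}, the mean-square convergence of the filter $\Mpro^{\HC,\lambda}$ to the drift $\mu$. The plan is to reduce the problem entirely to the covariance bound already established. The essential observation is Lemma~\ref{EM_EQ_lemma}, which gives the exact identity
\begin{align*}
\Erw\big[\,\big\|\Mpro_t^{\HC,\lambda}-\drift_t\big\|^2\,\big]
= \trace\big(\Erw\big[\Qpro_t^{\HC,\lambda}\big]\big).
\end{align*}
This identity converts a statement about the second moment of the estimation error into a statement about the (expected, traced) conditional covariance, which is precisely the quantity controlled by Theorem~\ref{Convergence_trace}.

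First I would invoke Lemma~\ref{EM_EQ_lemma} to rewrite the left-hand side of \eqref{bound_fltred_drift} as $\trace(\Erw[\Qpro_t^{\HC,\lambda}])$. Since the trace operator and the expectation are both linear and commute, this equals $\Erw[\trace(\Qpro_t^{\HC,\lambda})]$. Then I would apply Theorem~\ref{Convergence_trace} directly: for every $\delta\in(0,T]$ there exists $\lambda_Q>0$ such that for $\lambda\ge\lambda_Q$ and $t\in[\delta,T]$ we have $\Erw[\trace(\Qpro_t^{\HC,\lambda})]\le \uppboundC/\sqrt{\lambda}$, with the constant $\uppboundC=\uppboundC(\delta)$ from \eqref{bound_c0}. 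Chaining these two steps yields \eqref{bound_fltred_drift} with exactly the same constants $\uppboundC$ and $\lambda_Q$, which is why the theorem statement borrows them verbatim from Theorem~\ref{Convergence_trace}.

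The final convergence claim follows immediately: since the bound $\uppboundC/\sqrt{\lambda}\to 0$ as $\lambda\to\infty$ for each fixed $\delta$, and since every $t\in(0,T]$ lies in $[\delta,T]$ for a sufficiently small choice of $\delta$, the mean-square error tends to zero pointwise in $t\in(0,T]$. This gives both the rate $1/\sqrt{\lambda}$ and the consistency of the estimator.

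There is, in truth, no real obstacle here — the work has all been front-loaded into Lemma~\ref{EM_EQ_lemma} and Theorem~\ref{Convergence_trace}. The only point deserving a moment's care is the interchange of trace and expectation, which is justified because $\Qpro_t^{\HC,\lambda}$ is bounded by Proposition~\ref{properties_filter} (so all entries are integrable), and the clean transfer of constants, which is automatic once one notices that \eqref{EM_EQ} is an \emph{exact identity} rather than an inequality. I would therefore present the proof as a two-line application of the preceding results.
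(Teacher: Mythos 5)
Your proposal is correct and follows exactly the paper's own proof: it combines the identity of Lemma~\ref{EM_EQ_lemma} with the bound \eqref{bound_trace} from Theorem~\ref{Convergence_trace}, inheriting the constants $\uppboundC$ and $\lambda_Q$ verbatim, and then lets $\delta$ range over $(0,T]$ for the pointwise convergence. Your added remark on interchanging trace and expectation (justified by Proposition~\ref{properties_filter}) is a harmless extra precaution, not a deviation.
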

\begin{proof}
 Using identity     \eqref{EM_EQ} from Lemma  \ref{EM_EQ_lemma}
and applying inequality \eqref{bound_trace} of Theorem
\ref{Convergence_trace} we
 obtain
\begin{align}
\Erw \big[\big\|\Mpro_t^{\HC,\lambda}-\drift_t\big\|^2\big]  
&=\trace\big(\Erw \big[\Qpro_t^{\HC,\lambda}\big]  \big)
 \leq
\frac{\uppboundC}{\sqrt{\lambda}}\quad \text{for }~ \lambda\geq \lambda_0,~
t\in[\delta,T].\nonumber
\end{align}
 Since the above inequality holds for all $\delta\in (0,T]$ we finally obtain the desired convergence of the filter $\Mpro_t^{\HC,\lambda}$ for $t\in(0,T])$
as $\lambda\to\infty$, i.e.
$$
\Erw\big[ \big\|\Mpro_t^{\HC,\lambda}-\drift_t\big\|^2\ \big]\rightarrow
0 \quad \text{as }\lambda\rightarrow\infty.
$$
\end{proof}

\section{Filter Asymptotics for  Continuous-Time  Expert Opinions }
\label{asymptotic_filter_cont}
In the preceding section we already mentioned  that there is another asymptotic regime if  the variance of the expert opinions
$\Gamma$ is not independent of the arrival intensity $\lambda$  but grows linearly in $\lambda$. We now want to establish some relations to the case of constant $\Gamma$ studied above.

Suppose that   $\Gamma=\Gamma^\lambda = \lambda \sigma_\myzeta\sigma_\myzeta^{\top}$ where  $\sigma_\myzeta$ is the volatility matrix of the  continuous-time expert opinion process 
$d\myzeta_t =\mu_t\,d t +\sigma_\myzeta\,d W^\myzeta_t$
defined in \eqref{continuous_expert}. There we introduced the $\HD$-investor who combines observations of stock returns with those of $\myzeta$ (instead of discrete-time expert opinions).
For that setting   Sass et al. \cite{Sass et al (2018)} show that the
information  the $\HC$-investor obtains from observing
discrete-time expert opinions is asymptotically the same as the information of the $\HD$-investor extracting from observing the 
diffusion process $\myzeta$  if the model of the expert's views $Z_k$  given in \eqref{Expertenmeinungen_fest} uses standard normally distributed random variables  $\eps_k$ defined by the increments of $W^\myzeta$ in the form $\eps_k=\sqrt{\lambda} (W^\myzeta_{k/\lambda}- W^\myzeta_{(k-1)/\lambda}), k\in\N$.
 In particular they prove the mean-square convergence of filter processes $\Mpro^\HC, \Qpro^\HC$ to the corresponding filter processes $\Mpro^\HD, \Qpro^\HD$ of the $\HD$-investor and also provide the corresponding error estimates.  These limit theorems justify so-called diffusion approximations of the filter for high-frequency discrete-time expert opinions to \textit{fixed  and  sufficiently large} variance $\overline \Gamma$ of the expert stating that the filter for the $\HC$-investor can be approximated by the filter of a $\HD$-investor with volatility matrix $\sigma_\myzeta =\sigma_\myzeta^{\lambda}=\frac{1}{\sqrt{\lambda}} \overline \Gamma^{1/2}$. 

Motivated by the results of the preceding section where we studied the filter asymptotics of the $\HC$-investor with  fixed expert's variance  $\Gamma=\overline \Gamma$ for $\lambda \to\infty$ we now want to study the asymptotics of the associated diffusion approximations. We therefore introduce a family of diffusion processes $(\myzeta^\lambda)_{\lambda>0}$ defined by
\begin{equation}\label{continuous_expert_lambda}
d\myzeta^\lambda_t =\mu_t\,d t +\frac{1}{\sqrt{\lambda}}\,\overline\sigma_\myzeta\,d W^\myzeta_t
\end{equation}
with a constant matrix $\overline \sigma_\myzeta\in\R^{d\times \nWienerExperten}$ chosen   such that $\overline \Sigma_{\myzeta}:=\overline\sigma_\myzeta \overline\sigma_\myzeta^{\top}$ is positive definite.  Then it holds $\Sigma_{\myzeta}=\Sigma_{\myzeta}^\lambda=\frac{1}{\lambda}\overline \Sigma_{\myzeta}$.
Since  $=\frac{1}{\sqrt{\lambda}} \overline\sigma_\myzeta\to 0$ for $\lambda\to\infty$ the limit case is not covered by the limit theorems in \cite{Sass et al (2018)} and the diffusion approximation degenerates. Nevertheless, from a statistical point of view there is a clear interpretation. In the limit the $\HD$-investor can perfectly reconstruct the hidden drift $\mu$ from observing the limiting process $\myzeta^\infty$ defined by the (deterministic) ODE $d\myzeta^\infty_t =\mu_t\,d t $ and has thus full information on the drift.

Below we provide  a precise mathematical meaning to that convergence to full information  and prove the corresponding limit theorems for the filter processes which are analogues to their  counterparts for high-frequency discrete-time experts in Section \ref{asymptotic_filter}. We also provide the associated  error bounds for the drift estimates of the $\HD$-investor.
It turns out that we can benefit a lot from the techniques developed in the proofs of Section \ref{asymptotic_filter}.

Starting point is the conditional covariance $\Qpro^\HD=\Qpro^{\HD,\lambda}$. Note that, contrary to the stochastic conditional covariance  $ \Qpro^{\HC,\lambda}$ of the $\HC$-investor, $\Qpro^{\HD,\lambda}$ is deterministic. According to  Lemma \ref{Riccati_D} it satisfies the Riccati  differential equation \eqref{Riccati_D} which we rewrite as 
	\begin{align}
	\label{Riccati_D_cont}
	d\Qpro_t^{\HD,\lambda} & =\alphaQD(\Qpro_t^{\HD,\lambda})\, dt,\quad  \Qpro_0^{\HD,\lambda}=\condcovinitial, \qquad \text{where}\\ 
	 \label{alpha_Q_def_cont}
	 \alphaQD(\variance)& = \Sigma_{\mu}-\revspeed \variance-\variance\revspeed^{\top}-\variance {\neu  (\Sigma_R^{-1} + \lambda \overline \Sigma_\myzeta^{-1})} \variance.	 
	\end{align}

\begin{lemma}(\textbf{Properties of $\alphaQD$})\\
	\label{alpha_Properties_cont}%
For the function $\alphaQD$ given
	in \eqref{alpha_Q_def_cont} there exist constants
	$a_{\alpha},b_{\alpha}>0$ independent of $\lambda$  such that  for all symmetric and positive
	semidefinite $q\in \R^{d\times d}$
	\begin{align}
	\trace\big(\alphaQD(q)\big)\leq 
	a_{\alpha}-\sqrt{\lambda}\, b_{\alpha} \trace(q),\quad
	\text{for}\quad \lambda>0. \label{alpha_G_Abs_cont}
	\end{align}
	The above estimate holds for 
	\begin{align}
	\label{constants_ablambda_cont}
	a_\alpha=\trace(\Sigma_\mu)+(d \trace(\overline\Sigma_\myzeta)\mynu)^{-1} \quad \text{and} \quad 
	b_\alpha= 2(d \trace(\overline\Sigma_\myzeta)\sqrt{\mynu})^{-1} 
	\end{align}	
	and every $\mynu>0$.
\end{lemma}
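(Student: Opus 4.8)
The plan is to mirror the strategy behind Lemma~\ref{alpha_Properties}, but to exploit that the $\lambda$-dependent term now enters linearly through $\lambda\overline\Sigma_\myzeta^{-1}$ rather than through the matrix fraction $\lambda\variance(\variance+\varianceexp)^{-1}\variance$. This makes the argument considerably cleaner and, notably, removes the need for any threshold $\lambda_0$. First I would take the trace of the expression \eqref{alpha_Q_def_cont} and discard every term that contributes nonpositively. For symmetric positive semidefinite $q$ the mean-reversion contribution satisfies $\trace(\revspeed q)+\trace(q\revspeed^\top)=\trace\big((\revspeed+\revspeed^\top)q\big)\ge 0$, since the symmetric part of the positive definite matrix $\revspeed$ is positive semidefinite and the trace of a product of two positive semidefinite matrices is nonnegative; likewise $\trace(q\Sigma_R^{-1}q)\ge 0$. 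Dropping these terms yields
\begin{align*}
\trace\big(\alphaQD(q)\big)\le \trace(\Sigma_\mu)-\lambda\,\trace\big(q\,\overline\Sigma_\myzeta^{-1}q\big).
\end{align*}

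Next I would bound the remaining quadratic term from below in terms of $\trace(q)$. Since $\overline\Sigma_\myzeta$ is positive definite, all its eigenvalues are positive, so its largest eigenvalue is at most $\trace(\overline\Sigma_\myzeta)$ and hence $\overline\Sigma_\myzeta^{-1}\succeq \trace(\overline\Sigma_\myzeta)^{-1}I_d$. Using the cyclic property of the trace together with the elementary inequality $\trace(q^2)\ge d^{-1}(\trace q)^2$ (Cauchy--Schwarz applied to the nonnegative eigenvalues of $q$) gives
\begin{align*}
\trace\big(q\,\overline\Sigma_\myzeta^{-1}q\big)=\trace\big(\overline\Sigma_\myzeta^{-1}q^2\big)\ge \frac{1}{\trace(\overline\Sigma_\myzeta)}\trace(q^2)\ge \frac{1}{d\,\trace(\overline\Sigma_\myzeta)}(\trace q)^2,
\end{align*}
so that, writing $D:=d\,\trace(\overline\Sigma_\myzeta)$, I arrive at the quadratic bound $\trace(\alphaQD(q))\le \trace(\Sigma_\mu)-\tfrac{\lambda}{D}(\trace q)^2$.

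Finally I would convert this quadratic upper bound into the desired affine bound by completing the square. For any $\mynu>0$ and any $x:=\trace(q)\ge 0$ the trivial inequality $\tfrac{1}{D}\big(\sqrt{\lambda}\,x-\mynu^{-1/2}\big)^2\ge 0$ rearranges to $\tfrac{\lambda}{D}x^2\ge \tfrac{2\sqrt{\lambda}}{D\sqrt{\mynu}}x-\tfrac{1}{D\mynu}$, whence
\begin{align*}
\trace\big(\alphaQD(q)\big)\le \trace(\Sigma_\mu)+\frac{1}{D\mynu}-\frac{2\sqrt{\lambda}}{D\sqrt{\mynu}}\,\trace(q),
\end{align*}
which is precisely \eqref{alpha_G_Abs_cont} with the constants $a_\alpha$ and $b_\alpha$ stated in \eqref{constants_ablambda_cont}. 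Because none of these steps requires a lower bound on $\lambda$, the estimate holds for all $\lambda>0$. The only mildly delicate points are the correct treatment of the possibly nonsymmetric mean-reversion matrix $\revspeed$ and the deliberate use of the crude spectral estimate $\overline\Sigma_\myzeta^{-1}\succeq \trace(\overline\Sigma_\myzeta)^{-1}I_d$, which is exactly what produces $\trace(\overline\Sigma_\myzeta)$ (rather than its largest eigenvalue) in the final constants; everything else is a routine completion of the square.
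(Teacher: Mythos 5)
Your proof is correct and follows essentially the same route as the paper: drop the nonpositive mean-reversion and $\Sigma_R^{-1}$ contributions, bound $\trace\big(q\,\overline\Sigma_\myzeta^{-1}q\big)\ge \big(d\,\trace(\overline\Sigma_\myzeta)\big)^{-1}\trace^2(q)$ via the trace inequalities, and then linearize the resulting concave quadratic in $\trace(q)$ with the free parameter $\mynu$. Your completion of the square at $\mynu^{-1/2}$ is algebraically identical to the paper's tangent-line bound at $x_0=1/\sqrt{\lambda\mynu}$ (the paper merely carries the term $-\varrho_{\min}(\revspeed+\revspeed^{\top})\trace(q)$ a little longer before discarding it), so there is nothing substantive to add.
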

The  proof is given in Appendix \ref{proof_lemma_alphaQ_cont}.
Note that contrary to the corresponding estimate for $\alphaQC$ given in Lemma \ref{alpha_Properties} the above estimate for $\alphaQD$ is valid not only for sufficiently large $\lambda\ge \lambda_0>0$ but for all $\lambda >0$. 

The following  theorem provides an analogous result to Theorem \ref{Convergence_trace} and  gives an upper bound for the expectation of the  trace of
$\Qpro^{\HD,\lambda}$ from which the convergence to zero can be deduced.

\begin{theorem}
	\label{Convergence_trace_cont}  For every $\delta\in(0,T]$ and $\lambda>0$ it holds
	\begin{align}
	\trace\big(\Qpro_t^{\HD,\lambda} \big) \leq
	\frac{\uppboundD}{\sqrt{\lambda}}\quad \text{for }~  t\in[\delta,T]  \quad\text{where}
	\label{bound_trace_cont}\\
	\uppboundD=\uppboundD(\delta)=
	\big(d  \trace(\overline\Sigma_\myzeta) [\trace(\Sigma_\mu) + \trace(\condcovinitial) (e\,\delta)^{-1}] \big)^{1/2},
	\label{bound_c0_cont}
	\end{align} 
	where $e=\exp(1)$ denotes Euler's number.\\
	In particular, it holds~~ $~\trace\big(\Qpro_t^{\HD,\lambda}	\big)\to 0\quad\text{as }~ \lambda\to \infty~ \text{for all }~
	t\in(0,T].$
\end{theorem}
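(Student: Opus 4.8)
The plan is to mirror the proof of Theorem \ref{Convergence_trace}, exploiting two simplifications specific to the $\HD$-investor: the conditional covariance $\Qpro^{\HD,\lambda}$ is deterministic, so every expectation operator disappears, and the estimate of Lemma \ref{alpha_Properties_cont} holds for \emph{every} $\lambda>0$ rather than only for $\lambda\ge\lambda_0$. First I would set $g(t):=\trace(\Qpro_t^{\HD,\lambda})$. Since $\Qpro^{\HD,\lambda}$ solves the Riccati ODE \eqref{Riccati_D_cont}, taking the trace and using its linearity gives $g'(t)=\trace(\alphaQD(\Qpro_t^{\HD,\lambda}))$ with $g(0)=\trace(\condcovinitial)$. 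Lemma \ref{alpha_Properties_cont} then yields the differential inequality $g'(t)\le a_\alpha-\sqrt{\lambda}\,b_\alpha\,g(t)$, now valid for all $\lambda>0$.

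Next I would apply Gronwall's lemma in differential form, exactly as in the step leading to \eqref{g_esti}, to obtain $g(t)\le \trace(\condcovinitial)\,e^{-\sqrt{\lambda}\,b_\alpha t}+\frac{a_\alpha}{\sqrt{\lambda}\,b_\alpha}$. Restricting to $t\ge\delta$ and multiplying by $\sqrt{\lambda}$, the only $\lambda$-dependent term is $\trace(\condcovinitial)\sqrt{\lambda}\,e^{-\sqrt{\lambda}\,b_\alpha\delta}$; writing $x=\sqrt{\lambda}$ and maximizing $x\mapsto x\,e^{-x b_\alpha\delta}$, whose maximum value is $(e\,b_\alpha\delta)^{-1}$ attained at $x=(b_\alpha\delta)^{-1}$, gives the uniform bound $\sqrt{\lambda}\,g(t)\le b_\alpha^{-1}\big(\trace(\condcovinitial)(e\,\delta)^{-1}+a_\alpha\big)$ for all $\lambda>0$ and $t\in[\delta,T]$.

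The final step is the only genuinely new computation compared with Theorem \ref{Convergence_trace}: optimizing over the free parameter $\mynu>0$ in \eqref{constants_ablambda_cont}. Substituting $a_\alpha=\trace(\Sigma_\mu)+(d\,\trace(\overline\Sigma_\myzeta)\mynu)^{-1}$ and $b_\alpha=2(d\,\trace(\overline\Sigma_\myzeta)\sqrt{\mynu})^{-1}$, and writing for brevity $D:=d\,\trace(\overline\Sigma_\myzeta)$ and $A:=\trace(\Sigma_\mu)+\trace(\condcovinitial)(e\,\delta)^{-1}$, the bound becomes $\frac{D\sqrt{\mynu}}{2}\big(A+(D\mynu)^{-1}\big)=\frac{DA}{2}\sqrt{\mynu}+\frac{1}{2\sqrt{\mynu}}$. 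This is minimized at $\mynu^*=(DA)^{-1}$, with minimal value $\sqrt{DA}=\uppboundD$, which is precisely the constant in \eqref{bound_c0_cont}. Choosing this $\mynu^*$ (independent of $\lambda$) gives $\trace(\Qpro_t^{\HD,\lambda})\le \uppboundD/\sqrt{\lambda}$ on $[\delta,T]$; letting $\lambda\to\infty$ with $\delta\le t$ fixed then yields the pointwise convergence on $(0,T]$.

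I do not expect a real obstacle here. The delicate part of Theorem \ref{Convergence_trace} was that the optimal choice $b_\alpha=\overline{b}_\alpha$ forced $\lambda_0=\infty$, requiring a perturbation $b_\alpha=\overline{b}_\alpha-\eta$ together with a limiting argument. In the present setting Lemma \ref{alpha_Properties_cont} is valid for all $\lambda>0$, so the optimal parameter $\mynu^*$ can be inserted directly and the bound holds for every $\lambda>0$ without any threshold $\lambda_Q$; this is exactly the advantage noted in the remark following Lemma \ref{alpha_Properties_cont}.
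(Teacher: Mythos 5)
Your proposal is correct and follows essentially the same route as the paper's proof: define $g(t)=\trace(\Qpro_t^{\HD,\lambda})$, combine the Riccati ODE \eqref{Riccati_D_cont} with Lemma \ref{alpha_Properties_cont} and Gronwall's lemma, bound the transient term $\trace(\condcovinitial)\sqrt{\lambda}\,e^{-\sqrt{\lambda}\,b_\alpha\delta}$ by its maximum $(e\,b_\alpha\delta)^{-1}\trace(\condcovinitial)$, and then minimize over the free parameter $\mynu$, obtaining the same minimizer $\mynu^*=(d\,\trace(\overline\Sigma_\myzeta)[\trace(\Sigma_\mu)+\trace(\condcovinitial)(e\,\delta)^{-1}])^{-1}$ and the constant $\uppboundD$ of \eqref{bound_c0_cont}. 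Your closing observation — that the absence of a threshold $\lambda_0$ in Lemma \ref{alpha_Properties_cont} lets one plug in the optimal parameter directly, avoiding the perturbation argument of Theorem \ref{Convergence_trace} — is exactly the point the paper makes as well.
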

\begin{proof}
	Let us define the function $g(t):=\trace (\Qpro_t^{\HD,\lambda}) $
	for $t\in[0,T]$. Then using
	\eqref{Riccati_D_cont} and  the linearity of  {\neuneu the} trace operator yields
	$g(t) =\trace(\condcovinitial)+\int_0^t\trace(\alphaQD(\Qpro_s^{\HD,\lambda}))
	ds=\trace(\condcovinitial)+\int_0^t g(s)ds.$
	Analogous to the proof of \eqref{h_esti} in Theorem \ref{Convergence_trace}, i.e., applying  Lemma \ref{alpha_Properties_cont} and Gronwall's Lemma we obtain  for $t\in[\delta,T]$  and $\lambda>0$	 
	\begin{align}
	\label{g_esti_cont} 
	g(t) &\leq  \frac{1}{\sqrt{\lambda}}\,\frac{1}{b_{\alpha}} (\trace(\condcovinitial)(e\,\delta)^{-1}
		+a_\alpha). 
	\end{align}
Recall \eqref{constants_ablambda_cont} stating that   the constants $a_\alpha, b_\alpha$ can be chosen as
\begin{align}a_\alpha=a_\alpha(\mynu)=\trace(\Sigma_\mu)+(d \trace(\overline\Sigma_\myzeta)\mynu)^{-1} ~ \text{and} ~
b_\alpha= b_\alpha(\mynu)=2(d \trace(\overline\Sigma_\myzeta)\sqrt{\mynu})^{-1} 
\end{align}	
for any $\mynu>0$. We now choose $\mynu$ such that the r.h.s.~of \eqref{g_esti_cont} attains its minimum. The unique minimizer is found as $\mynu^* =(d \trace(\overline\Sigma_\myzeta) [\trace(\condcovinitial)(e\,\delta)^{-1}+ \trace(\Sigma_\mu)])^{-1}$ and the minimal value is given by $\uppboundD/\sqrt{\lambda}$ with $\uppboundD$
  defined in \eqref{bound_c0_cont}.	This proves the first claim. 	
	Since that inequality holds for all $\delta\in(0,T]$ the convergence
	$\trace(\Qpro_t^{\HC,\lambda} )\to 0 ~\text{for }~
	\lambda\to \infty$ holds for all $t\in(0,T]$.
\end{proof}

As in Section \ref{Cond_Var}  the above asymptotic properties for   the
 trace of $\Qpro^{\HD,\lambda}$  imply  analogous results  for its norm. The proof is analogous to the proof of Corollary \ref{Convergence_Frobenius}.
\begin{cor}
	\label{Convergence_Frobenius_cont}
	
	For every   $\delta\in(0,T]$ and any matrix norm $\|\cdot\|$ there  exists {\neuneu a constant} $C>0$
	such that 
	\begin{align}
	 \big\|{\Qpro_t^{\HD,\lambda} }\big\|^p\leq
	\frac{C}{\sqrt{\lambda}}\quad \text{for }~ \lambda>0,\quad t\in[\delta,T]  \text{ and } p\ge 1. \label{bound_covariance_cont}
	\end{align} 
	For the  Frobenius norm $\|\cdot\|_F$ the constant $C$ can be chosen as  $C=\uppboundD C_F^{p-1}$ where 
		$\uppboundD$ {\neuneu is} given in \eqref{bound_c0_cont} and $C_F$ denotes the upper bound from Proposition \ref{properties_filter} for the Frobenius norm  $\|\Qpro^{\HD,\lambda}\|_F$. 
	\\[0.5ex]
	In particular, it holds~~
	$~  \,\big\|{\Qpro_t^{\HD,\lambda} }\big\|^ p \to
	0\quad\text{as }~ \lambda\rightarrow \infty \quad \text{for all }~
	t\in(0,T].$
\end{cor}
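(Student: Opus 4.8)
The plan is to follow the same strategy as in Corollary \ref{Convergence_Frobenius}, exploiting that $\Qpro^{\HD,\lambda}_t$ is a symmetric positive semidefinite matrix whose trace is already controlled by Theorem \ref{Convergence_trace_cont}. The one structural simplification compared to the $\HC$-case is that $\Qpro^{\HD,\lambda}$ is deterministic, so no expectation operator appears and every estimate holds pathwise; the resulting bound is therefore in fact slightly stronger than its $\HC$-counterpart.

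First I would reduce the $p$-th power of the Frobenius norm to a single power times the trace. Since $\Qpro_t^{\HD,\lambda}$ is symmetric and positive semidefinite, Lemma \ref{properties_Gram}, inequality \eqref{Frob_trace}, gives $\|\Qpro_t^{\HD,\lambda}\|_F \le \trace(\Qpro_t^{\HD,\lambda})$, while Proposition \ref{properties_filter} supplies the uniform bound $\|\Qpro_t^{\HD,\lambda}\|_F \le C_F$ via the ordering $\Qpro^{\HD,\lambda} \preceq \Qpro^R$. Splitting off the remaining $p-1$ factors then yields, for every $p\ge 1$,
\begin{align*}
\|\Qpro_t^{\HD,\lambda}\|_F^p = \|\Qpro_t^{\HD,\lambda}\|_F^{p-1}\,\|\Qpro_t^{\HD,\lambda}\|_F \le C_F^{p-1}\,\trace(\Qpro_t^{\HD,\lambda}).
\end{align*}

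Next I would insert the deterministic trace estimate from Theorem \ref{Convergence_trace_cont}, namely $\trace(\Qpro_t^{\HD,\lambda}) \le \uppboundD/\sqrt{\lambda}$ for $t\in[\delta,T]$, which immediately establishes \eqref{bound_covariance_cont} for the Frobenius norm with the claimed constant $C=\uppboundD C_F^{p-1}$. For an arbitrary matrix norm the assertion then follows from the equivalence of all norms on the finite-dimensional space $\R^{d\times d}$, which merely adjusts the constant $C$. The convergence $\|\Qpro_t^{\HD,\lambda}\|^p \to 0$ as $\lambda\to\infty$ for each $t\in(0,T]$ is read off by letting $\delta\downarrow 0$, exactly as in Theorem \ref{Convergence_trace_cont}.

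I do not expect a genuine obstacle here, since the statement is a direct corollary of the deterministic trace bound. The only points requiring a moment of care are verifying that $\|A\|_F \le \trace(A)$ for symmetric positive semidefinite $A$ (which follows by comparing $\sqrt{\sum_i \lambda_i^2}$ with $\sum_i \lambda_i$ for the non-negative eigenvalues $\lambda_i$, the cross terms being non-negative) and observing that, because $\Qpro^{\HD,\lambda}$ is deterministic, the whole argument proceeds without the tower-law and martingale considerations that were needed in the $\HC$-setting.
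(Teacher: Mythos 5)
Your proof is correct and follows essentially the same route as the paper, which simply declares this corollary's proof analogous to that of Corollary \ref{Convergence_Frobenius}: the chain $\|\Qpro_t^{\HD,\lambda}\|_F^p \le C_F^{p-1}\trace(\Qpro_t^{\HD,\lambda}) \le \uppboundD C_F^{p-1}/\sqrt{\lambda}$ via Lemma \ref{properties_Gram} (inequality \eqref{Frob_trace}), Proposition \ref{properties_filter} and Theorem \ref{Convergence_trace_cont}, followed by norm equivalence. Your observation that the deterministic nature of $\Qpro^{\HD,\lambda}$ makes the argument pathwise and dispenses with the expectation is a correct and apt remark, matching the paper's intent.
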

Based on the limit theorem for the conditional variance we now can state the corresponding result for the convergence of the conditional mean $\Mpro^{\HD}=\Mpro^{\HD,\lambda}$. The proof is analogous to the proof of Theorem \ref{Convergence_filter}.
\begin{theorem}
	\label{Convergence_filter_cont}
	Let $\uppboundD$   be the constant given in Theorem \ref{Convergence_trace_cont}. 
	Then for every   $\delta\in(0,T]$
	\begin{align}
	\Erw\big[ \,\big\|\Mpro_t^{\HD,\lambda}-\drift_t\big\|^2\, \big]  ~\leq~
	\frac{\uppboundD}{\sqrt{\lambda}}\quad \text{for }~ \lambda>0,\quad t\in[\delta,T]. \label{bound_fltred_drift_cont}
	\end{align} 
	In particular, it holds~~
	$\Erw\big[ \,\big\|\Mpro_t^{\HD,\lambda}-\drift_t\big\|^2\, \big]
	\rightarrow 0\quad\text{as }~ \lambda\rightarrow \infty \quad
	\text{for all }~ t\in(0,T].$
\end{theorem}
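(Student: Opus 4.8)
The plan is to reduce Theorem~\ref{Convergence_filter_cont} to the bound on the conditional covariance already established for the $\HD$-investor, exactly mirroring the argument used for the $\HC$-investor in Theorem~\ref{Convergence_filter}. The crucial observation is that Lemma~\ref{EM_EQ_lemma} is a purely probabilistic identity: it rests only on the tower property of conditional expectation together with the defining relation \eqref{cond_variance__def} of the conditional covariance matrix. Nothing in its proof uses the specific structure of the $\HC$-filter, so the identical computation applies verbatim with $\HC$ replaced by $\HD$. Hence I would first record the analogue
\begin{align}
\label{EM_EQ_cont}
\Erw\big[\,\big\|\Mpro_t^{\HD,\lambda}-\drift_t\big\|^2\,\big]
= \trace\big(\Erw\big[\Qpro_t^{\HD,\lambda}\big]\big),
\end{align}
whose justification is the same tower-law manipulation as in Lemma~\ref{EM_EQ_lemma}: expand the squared norm as a trace of an outer product, condition on $\mathcal F_t^{\HD,\lambda}$, and identify the inner conditional expectation with $\Qpro_t^{\HD,\lambda}$ via \eqref{cond_variance__def}.

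Next I would exploit the fact that for the $\HD$-investor the conditional covariance $\Qpro^{\HD,\lambda}$ is \emph{deterministic}, as noted after Lemma~\ref{Kalmann_Filter_D_lemma}. This means $\Erw[\Qpro_t^{\HD,\lambda}]=\Qpro_t^{\HD,\lambda}$, so the right-hand side of \eqref{EM_EQ_cont} is simply $\trace(\Qpro_t^{\HD,\lambda})$, with no expectation to carry around. I would then invoke inequality \eqref{bound_trace_cont} of Theorem~\ref{Convergence_trace_cont}, which states $\trace(\Qpro_t^{\HD,\lambda})\le \uppboundD/\sqrt{\lambda}$ for all $t\in[\delta,T]$ and every $\lambda>0$. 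Chaining \eqref{EM_EQ_cont} with this bound gives
\begin{align}
\nonumber
\Erw\big[\,\big\|\Mpro_t^{\HD,\lambda}-\drift_t\big\|^2\,\big]
= \trace\big(\Qpro_t^{\HD,\lambda}\big)
\le \frac{\uppboundD}{\sqrt{\lambda}}
\quad\text{for }\lambda>0,\ t\in[\delta,T],
\end{align}
which is precisely \eqref{bound_fltred_drift_cont}.

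Finally, since the estimate holds for every $\delta\in(0,T]$, I would let $\delta$ range over $(0,T]$ and conclude that for each fixed $t\in(0,T]$ the right-hand side tends to zero as $\lambda\to\infty$, establishing the claimed mean-square convergence $\Mpro_t^{\HD,\lambda}\to\drift_t$. I do not anticipate a genuine obstacle here: the theorem is a direct corollary of two already-proven ingredients, and the only point requiring a moment's care is the transfer of Lemma~\ref{EM_EQ_lemma} to the $\HD$-setting. Because that lemma's proof is model-agnostic, one could either restate and reprove it for $\HD$ or simply remark that the argument is identical; the additional simplification relative to the $\HC$-case is that the determinism of $\Qpro^{\HD,\lambda}$ removes the outer expectation, making the chain of inequalities even shorter than in Theorem~\ref{Convergence_filter}.
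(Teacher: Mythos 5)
Your proposal is correct and is exactly the paper's intended argument: the paper states only that the proof is analogous to that of Theorem~\ref{Convergence_filter}, i.e.\ transfer Lemma~\ref{EM_EQ_lemma} to the $\HD$-setting and chain it with the bound \eqref{bound_trace_cont} from Theorem~\ref{Convergence_trace_cont}. Your additional observation that the determinism of $\Qpro^{\HD,\lambda}$ makes the outer expectation superfluous is a valid (and correct) simplification consistent with the paper's remark following Lemma~\ref{Kalmann_Filter_D_lemma}.
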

Note that contrary to the corresponding estimate for $\Mpro^{\HC,\lambda}$ given in Theorem \ref{Convergence_filter} the above estimate for $\Mpro^{\HD,\lambda}$ is valid not only for sufficiently large $\lambda\ge \lambda_Q>0$ but for all $\lambda >0$.

\section{Numerical Example}
\label{numerics} In this section we illustrate the theoretical
findings of the previous sections by results of some  numerical
experiments. These experiments  are based on a  stock market model
where  the unobservable drift $\mu$  follows an Ornstein-Uhlenbeck
process as given in \eqref{drift} and \eqref{mu_explicit} whereas
the volatility is known and constant. For simplicity, we assume that
there is only one risky asset in the market, i.e. $\nAktien=1$. For
our numerical experiments  we use the model parameters given in
Table~\ref{parameter}.

The distribution of the initial value $\mu_0$ of the drift process
is assumed to be the stationary distribution of the
Ornstein-Uhlenbeck process, i.e., the limit of the marginal
distribution of $\mu_t$ for $t\to \infty$ which is known to be
Gaussian with mean $\driftinitial=\revlevel$ and  variance
$\covinitial=\frac{\voldrift^2}{2\revspeed}$.

\begin{table}[ht]

\begin{tabular}{|ll|r||ll|r|}
\hline
\rule{0mm}{3ex}%
 Mean  reversion level & $\revlevel$ & $0.1$ & Time horizon & $T$ & $1$ year    \\\hline
\rule{0mm}{3ex}%
 Mean reversion speed & $\revspeed$ & $3$ & Stock volatility & $\volR$ & $0.25$   \\\hline
\rule{0mm}{3ex}%
 Volatility & $\voldrift$ & $1$ & Expert's variance &$\varianceexp= \overline\Sigma_\myzeta$ &$0.05$\\\hline
\rule{0mm}{3ex}%
Initial value $\mu_0$: mean   & $\driftinitial=\revlevel$ & $0.1$ &
Filter: initial values & $m_0=\driftinitial$ & $0.1$ \\\hline
\rule[-1.5ex]{0mm}{5ex}%
\phantom{Initial value ~} variance  & $\covinitial=\frac{\voldrift^2}{2\revspeed}$ &
$0.1\overline{6}$ & & $\variance_0 = \covinitial $ &
$0.1\overline{6} $ \\\hline
\end{tabular}
\\[1ex]
 \centering \caption{\label{parameter}
 \small Model parameters  for numerical experiments
 }

\end{table}
The arrival dates of the expert opinions are modelled as jump times
of a Poisson process with intensity $\lambda$. Then the waiting
times between two information dates are exponentially distributed
with parameter $\lambda$  and the  investor receives until time $T$
on average  $\lambda T$ expert opinions. Recall that the expert's
views are modelled by
\begin{align*}
Z_k=\drift_{T_k}+{\varianceexp}^{\frac{1}{2}}\varepsilon_k^{},\quad
k\in\N,
\end{align*}
where $(\varepsilon_k)_{k\ge 1}$ is a sequence of independent
standard normally distributed random variables.

\begin{figure}[h]
	\hspace*{-3mm}
	\includegraphics[width=1.0\textwidth]{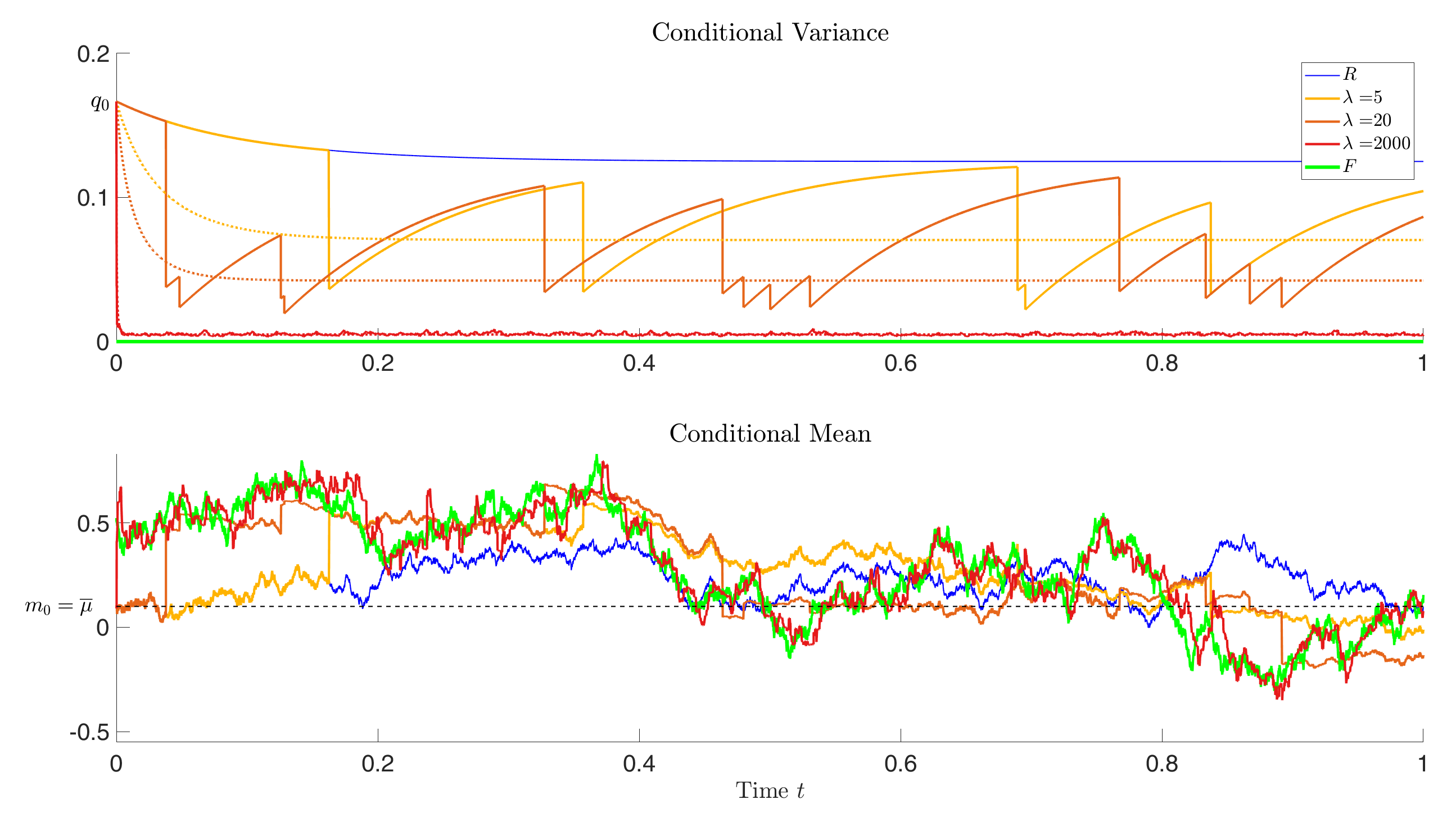}%
	
	\centering \caption{\label{high_frequency_opinions_Foto}
		\small
		Simulation of the filter processes $\Qpro^{H}$ and $\Mpro^{H}$. The upper subplot shows realizations of the conditional
		variances $\Qpro^{R}$,  $\Qpro^{\HC,\lambda}$ (solid) and $\Qpro^{\HD,\lambda}$ (dotted) for various
		intensities $\lambda$.  The volatility of the continuous-time expert opinions is chosen as $\sigma_\myzeta^{\lambda}=\sqrt{\Gamma/\lambda}$.
		The lower subplot shows  realizations of the
		corresponding conditional means $\Mpro^{R}$ and $\Mpro^{\HC,\lambda}$
		together with the path of the drift process $\mu$ (green).
	}
\end{figure}

At initial time $t=0$ all partially informed investors have the same
information about the hidden drift. For the experiment  we assume
that they only know the model parameters described by
$\mathcal{F}^H_0=\{\varnothing,\Omega\}$. Then the  initial values
for the filter processes  $\Mpro^H$ and $\Qpro^H$ are the parameters
of the Gaussian distribution of $\mu_0$,
i.e.~$m_0=\driftinitial=\revlevel$ and $\variance_0 =
\covinitial=\frac{\voldrift^2}{2\revspeed}$, respectively.

In Figure \ref{high_frequency_opinions_Foto} we plot the filters
given by conditional mean $\Mpro^H$ and conditional variance
$\Qpro^H$ of the $R$-investor  (blue), the   $\HC$-investor together with the associated $\HD$-investor against time. For the $\HC$-investor we consider the  arrival intensities
$\lambda=5, {\neu 20}, 2000$ (yellow, orange, red). The volatility of the associated continuous-time expert opinions is chosen as $\sigma_\myzeta^{\lambda}=\sqrt{\Gamma/\lambda}$.
In the upper plot one
can see the conditional variances $\Qpro^{R}$,   $\Qpro^{\HC,\lambda}$,  $\Qpro^{\HD,\lambda}$
and we also highlight (in green) the zero level corresponding to the
limit process for $\lambda\to\infty$. The lower plot shows a
realization of the unobservable   drift process $\drift$ (in green)
together with its estimates given by the conditional means
$\Mpro^{R}$ (blue) and  $\Mpro^{\HC,\lambda}$ (yellow, orange, red). We omit {\neuneu plotting} the paths of  $\Mpro^{\HD,\lambda}$.

Since the filter processes for the $R$- and $\HC$-investor start with
the same initial value their paths are identical until the arrival
of the first expert opinion leading to a filter update. This can be
nicely seen for $\lambda=5$ and also for $\lambda=20$ while for
$\lambda=2000$ the first update is almost immediately after the
initial time $t=0$. At the information dates the updates decrease
the conditional variance and lead to a jump of the conditional mean.
The updates of the conditional mean typically  decrease the distance
of $\Mpro^{\HC,\lambda}$ to the hidden drift $\mu$,  of course this
depends on the actual value of the expert's view. Note that the
drift estimate $\Mpro^R$ of the $R$-investor is quite poor and
fluctuates just around the mean-reversion level $\revlevel$.
However, the expert opinions  visibly improve the drift estimate.

After an update  the conditional variance $\Qpro^{\HC,\lambda}$
increases and if the waiting time to the next information date is
sufficiently large then it almost approaches the level of
$\Qpro^{R}$. Again, this can nicely be observed for $\lambda=5$.
During such long  periods without new expert opinions the
conditional mean of the $\HC$-investor  $\Mpro^{\HC,\lambda}$ tends to
move towards the path of $\Mpro^{R}$.

Looking at the paths of the conditional variance it can be seen that
$\Qpro^{R}_t$ dominates    $\Qpro^{\HC,\lambda}_t$ and  $\Qpro^{\HD,\lambda}_t$ for all $t\in(0,T]$
which confirms the corresponding property stated in Proposition
\ref{properties_filter} and illustrates the fact that additional
information by expert opinions  leads to improved  drift estimates.
Note that for increasing $t$ the conditional variances $\Qpro^R_t$  and  $\Qpro^{\HD,\lambda}_t$
quickly approach a constant which is the limit for $t\to\infty$.
That convergence   $\Qpro^R$ has been proven in Proposition 4.6 of Gabih et
al.~\cite{Gabih et al (2014)} for markets with a single stock and
generalized in Theorem 4.1 of Sass et al. \cite{Sass et al (2017)}
for markets with multiple stocks. The proof for $\Qpro^{\HD,\lambda}$ is analogous.

Comparing the paths of the filter processes of the
$\HC$-  and $\HD$-investor for increasing arrival intensity $\lambda$ it can be
observed that  the conditional variances $\Qpro^{\HC,\lambda}$ and  $\Qpro^{\HD,\lambda}$
approach zero for any $t\in(0,T]$. This fact illustrates our
findings in Theorems \ref{Convergence_trace}  and \ref{Convergence_trace_cont}.  Further, with
increasing $\lambda$ the  path of the conditional mean
$\Mpro^{\HC,\lambda}$ approaches the path of the hidden drift $\drift$
which confirms the mean-square convergence  stated in Theorem
\ref{Convergence_filter}.

\begin{figure}[h]
	\hspace*{-3mm}
	\includegraphics[width=1.0\textwidth,height=0.35\textwidth]{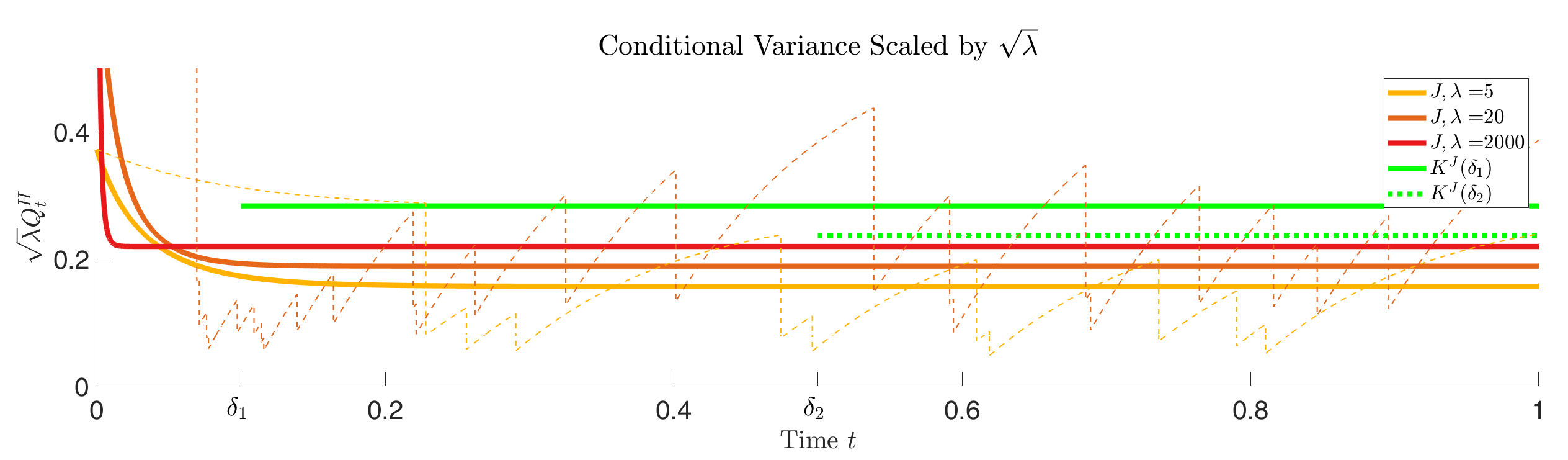}%
	
	\centering \caption{\label{variance_bounds}
		\small 		Conditional variances scaled by $\sqrt{\lambda}$ of the $\HD$-investor, i.e.~$\sqrt{\lambda}\,\Qpro^{\HD,\lambda}$ for $\lambda=5,20,2000$ (solid), and of the $\HC$-investor $\sqrt{\lambda}\,\Qpro^{\HC,\lambda}$ (dotted) for $\lambda=5,20$.  The volatility of the continuous-time expert opinions is chosen as $\sigma_\myzeta^{\lambda}=\sqrt{\Gamma/\lambda}$.
		The green lines represent the upper bounds $\uppboundD=\uppboundD(\delta)$ given in  Theorem \ref{Convergence_trace_cont} for $\delta=\delta_1=0.1$ and $\delta=\delta_2=0.5$.}
\end{figure}

Finally, we want to examine the goodness of the upper bounds $\uppboundC/\sqrt{\lambda}$ and $\uppboundD/\sqrt{\lambda}$  for the conditional variances of the $\HC$- and $\HD$-investor  given in 	Theorems \ref{Convergence_trace}  and \ref{Convergence_trace_cont}, respectively. Note that in the present example with $d=1$ stock the two constants $\uppboundC, \uppboundD$  coincide, it holds $\uppboundC= \uppboundD=
\big(\Gamma [\sigma_\mu^2 + \condcovinitial (e\,\delta)^{-1}] \big)^{1/2}.$ 
In order to facilitate the visual comparison of the conditional variances and their upper bounds we focus on the information regime $H=\HD$ and  rewrite the estimate  \eqref{bound_trace_cont} as  $\sqrt{\lambda}\Qpro_t^{\HD,\lambda}  \leq
\uppboundD =\uppboundD(\delta)~ \text{for }~  t\in[\delta,T]$. 
Fig.~\ref{variance_bounds} shows for $\lambda=5, {\neu 20}, 2000$ (yellow, orange, red solid lines) the conditional variances $\Qpro_t^{\HD,\lambda}$ scaled by $\sqrt{\lambda}$ together with the upper {\neuneu bounds} $\uppboundD(\delta)$ (green) for two values of $\delta$. It can be seen that the upper bounds are quite close to the actual values on $[\delta,T]$, in particular for larger $\delta$.

We also plot realizations of $\sqrt{\lambda}\Qpro_t^{\HC,\lambda}$ for the associated $\HC$-investor (dashed lines). Note that estimate \eqref{bound_trace} does hold for the expected variance $\Erw[ \Qpro_t^{\HC,\lambda}]$ but not for the realizations.

\begin{appendix}
    \section{Proofs }
    \label{appendix}
    \subsection{Auxiliary Results}
     The proof of Lemma \ref{alpha_Properties} which is given in Appendix \ref{proof_lemma_alphaQ}   is based on various
     properties of symmetric and positive semidefinite matrices which we
     collect in the next lemma.

         \begin{lemma}(\textbf{Properties of symmetric and positive semidefinite matrices})\\
        \label{properties_Gram}%
        Let $A,B\in\R^{d\times d}, d\in\N,$ symmetric and positive
        semidefinite matrices. Then it holds
        \begin{enumerate}
            \item  $A+B$ is symmetric positive semidefinite.
            \item The eigenvalues $\varrho_i=\varrho_i(A)$ of $A$ are
            nonnegative, and there exists an orthogonal matrix $V$ such that
            \begin{align}
            A=VDV^{\top}\quad\text{with}\quad
            D=\diag(\varrho_1,\cdots,\varrho_d),\label{diagona}
            \end{align}
            i.e., $A$ is diagonalizable.
            \item 
            \label{Ainv_symm_posdefinit}
            If $A$ is positive definite then it is nonsingular and the
            inverse  $A^{-1}$ is symmetric and positive definite.
            \item  
             \begin{align}
             \varrho_{\min}(A)\trace\big(B\big) \leq\trace(AB)\leq\varrho_{\max}(A)\trace(B)
             \label{trace-symmetric}
             \end{align}
              where $\varrho_{\min}(A)$ and $\varrho_{\max}(A)$ denote the smallest and largest eigenvalue of $A$, respectively.
            \item
            \begin{align}
             \frac{\trace(B)}{\trace\big(A^{-1}\big) } \trace(AB)\leq\trace(A)\trace(B)
            \label{trace-product}
            \end{align}
            where for the first inequality $A$ is assumed to be   positive definite.
            \item
            \begin{align}
            \trace^2(A)\geq\trace\big(A^2\big)\geq\frac{1}{\nAktien}\trace^2(A)
            \label{trace-quadrat}
            \end{align}      
            \item            
            \begin{equation}
            \label{Frob_trace} \|A\|_F=\sqrt{\trace\big(A^2)}\leq           \trace(A)
            \end{equation}
            where $\|A\|_F$ denotes  the Frobenius norm of $A$.

        \end{enumerate}
     \end{lemma}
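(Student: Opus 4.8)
The plan is to build everything on the spectral theorem for real symmetric matrices, which is precisely item 2, and then to read off the remaining statements from the orthogonal diagonalization $A=VDV^\top$ in \eqref{diagona} together with elementary scalar inequalities.

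First I would dispose of the structural items. Item 1 is immediate: $A+B$ is symmetric and $x^\top(A+B)x=x^\top A x+x^\top B x\ge 0$ for every $x\in\R^d$. For item 2 I would invoke the spectral theorem, which supplies the orthogonal $V$ and the diagonal $D=\diag(\varrho_1,\dots,\varrho_d)$; nonnegativity of each $\varrho_i$ follows by testing the quadratic form against the corresponding unit eigenvector, $\varrho_i=v_i^\top A v_i\ge 0$. Item 3 is then a corollary: positive definiteness sharpens this to $\varrho_i>0$, so $\det A=\prod_i\varrho_i>0$ and $A$ is nonsingular; symmetry of $A^{-1}$ comes from $(A^{-1})^\top=(A^\top)^{-1}=A^{-1}$, and writing $A^{-1}=V D^{-1}V^\top$ with $D^{-1}=\diag(1/\varrho_i)$ exhibits it as positive definite.

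The analytic core is item 4, from which item 5 is an easy consequence. Using $A=VDV^\top$ and cyclicity of the trace I would write $\trace(AB)=\trace(D\,V^\top B V)=\sum_i\varrho_i\,\widetilde B_{ii}$, where $\widetilde B:=V^\top B V$ is congruent to $B$, hence again symmetric positive semidefinite, so its diagonal entries satisfy $\widetilde B_{ii}\ge 0$ while $\sum_i\widetilde B_{ii}=\trace(B)$. Sandwiching each $\varrho_i$ between $\varrho_{\min}(A)$ and $\varrho_{\max}(A)$ then yields \eqref{trace-symmetric}. For \eqref{trace-product} I would read the display as the chain $\trace(B)/\trace(A^{-1})\le\trace(AB)\le\trace(A)\trace(B)$: the upper inequality follows from $\varrho_{\max}(A)\le\sum_i\varrho_i=\trace(A)$ applied to \eqref{trace-symmetric}, and the lower inequality uses $\varrho_{\min}(A)=1/\varrho_{\max}(A^{-1})\ge 1/\trace(A^{-1})$, which needs $A$ positive definite and thereby explains the stated hypothesis on the first inequality.

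Finally items 6 and 7 are purely scalar. Writing $\trace(A)=\sum_i\varrho_i$ and $\trace(A^2)=\sum_i\varrho_i^2$ with all $\varrho_i\ge 0$, the bound $\trace(A^2)\le\trace^2(A)$ holds because the omitted cross terms $\sum_{i\ne j}\varrho_i\varrho_j$ are nonnegative, while $\trace(A^2)\ge\frac{1}{\nAktien}\trace^2(A)$ is exactly Cauchy--Schwarz applied to $(\varrho_1,\dots,\varrho_d)$ and $(1,\dots,1)$, giving \eqref{trace-quadrat}. For item 7 I would use symmetry to identify $\|A\|_F=\sqrt{\trace(A^\top A)}=\sqrt{\trace(A^2)}$ and then invoke \eqref{trace-quadrat} to get $\|A\|_F\le\trace(A)$. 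I do not expect a genuine obstacle: the only step demanding a moment's care is the observation in item 4 that the congruence $\widetilde B=V^\top B V$ preserves positive semidefiniteness and hence forces $\widetilde B_{ii}\ge 0$, since this nonnegativity is exactly what legitimizes the eigenvalue sandwiching and propagates into items 5, 6 and 7.
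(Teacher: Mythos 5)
Your proof is correct, and it is more self-contained than the paper's own treatment: the paper disposes of items 1--3 by citing Horn and Johnson \cite{Horn_Johnson} and of the key trace bound \eqref{trace-symmetric} by citing Wang et al.~\cite{wang_kuo_hsu_1986}, reserving its own arguments for items 5--7 only. For those last three items your reasoning essentially coincides with the paper's: $\varrho_{\max}(A)\le\trace(A)$ and $\varrho_{\min}(A)\ge 1/\trace(A^{-1})$ fed into \eqref{trace-symmetric} to get \eqref{trace-product}; Cauchy--Schwarz on the eigenvalue vector for the lower bound in \eqref{trace-quadrat} (your cross-term argument $\trace^2(A)-\trace(A^2)=\sum_{i\ne j}\varrho_i\varrho_j\ge 0$ for the upper bound replaces the paper's appeal to \eqref{trace-product} with $A=B$, an equally valid shortcut); and $\|A\|_F^2=\trace(A^2)$ combined with \eqref{trace-quadrat} for \eqref{Frob_trace}. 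What your write-up genuinely adds is a proof of \eqref{trace-symmetric} itself: the congruence $\widetilde B=V^\top BV$ preserves positive semidefiniteness, so $\trace(AB)=\sum_i\varrho_i\widetilde B_{ii}$ with $\widetilde B_{ii}\ge 0$ and $\sum_i\widetilde B_{ii}=\trace(B)$, and sandwiching each $\varrho_i$ between $\varrho_{\min}(A)$ and $\varrho_{\max}(A)$ gives the claim; this makes the whole lemma elementary and removes the dependence on an external reference. You also correctly read the display \eqref{trace-product} as the two-sided chain $\trace(B)/\trace(A^{-1})\le\trace(AB)\le\trace(A)\trace(B)$ (the source omits one $\le$ sign), which is the reading forced by the phrase ``first inequality'' and the one the paper's own proof of \eqref{trace-quadrat} relies on.
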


    \begin{proof}
        The first three properties are standard and we refer to  Horn and Johnson \cite[Chapter 7]{Horn_Johnson}.  The proof of \eqref{trace-symmetric} is given in Wang et al.~\cite[Lemma 1]{wang_kuo_hsu_1986}.       
        \begin{enumerate}
            \setcounter{enumi}{4}

            \item   
            From \eqref{diagona} we have $A=VDV^{\top}$ with an  orthogonal matrix $V$ and
            	$D=\diag(\varrho_1,\cdots,\varrho_d)$. If $A$ is positive definite then $\varrho_{\min}(A)>0$ and the inverse $A^{-1}$ exists, see property \ref{Ainv_symm_posdefinit}. It holds $\trace(A)=\sum_{i=1}^\nAktien \varrho_i(A)\ge \varrho_{\max}(A)$ and 
            	\begin{align*}
            	\trace\big(A^{-1}\big)&=\trace(VD^{-1}V^{\top})=\trace(D^{-1}\big)=\sum_{i=1}^\nAktien \frac{1}{\varrho_i(A)}\ge \frac{1}{\varrho_{\min}(A)}.
            	\end{align*} 
            	The above inequalities together with  \eqref{trace-symmetric} imply  \eqref{trace-product}.

            \item  As above we use $A=VDV^{\top}$ with an  orthogonal matrix $V$ and deduce
             $A^2=VD^2V^\top$ and
            \begin{align}
            \trace\big(A^2\big)&=\trace\big(V^\top VD^2\big)= \trace\big(D^2\big)=\sum_{i=1}^\nAktien\varrho_i^2\geq\frac{1}{\nAktien}\Big(\sum_{i=1}^\nAktien\varrho_i\Big)^2
            =\frac{1}{\nAktien}\trace^2\big(A\big),\nonumber
            \end{align}
            where we have applied the Cauchy-Schwarz inequality. The first inequality in \eqref{trace-quadrat} follows from \eqref{trace-product} with $A=B$.          
            \item
            Let $C=A^2$, then  $C^{ii}=\sum_{k=1}^\nAktien
            (A^{ik})^2$ and
            \begin{align*}
            \trace(A^2)=\trace(C) = \sum_{i=1}^\nAktien
            C^{ii}&=\sum_{i,k=1}^\nAktien (A^{ik})^2 =\|A\|_F^2
            \end{align*}
            yielding the first equality. The inequality follows from \eqref{trace-quadrat}.
        \end{enumerate}
    \end{proof}
\subsection{Proof of Lemma \ref{alpha_Properties}}
\label{proof_lemma_alphaQ}
 For the convenience of the reader we recall the statement of Lemma \ref{alpha_Properties}:\\
{\itshape
    For the function $\alphaQC$ given
    in \eqref{alpha_Q_def} there exist constants
    $a_{\alpha},b_{\alpha}>0$ independent of $\lambda$ and there exists
    $\lambda_0>0$ such that  for all symmetric and positive semidefinite $q\in \R^{d\times d}$
    \begin{align*}
    \trace\big(\alphaQC(q)\big)\leq
    a_{\alpha}-\sqrt{\lambda}\, b_{\alpha} \trace(q),\quad
    \text{for}\quad \lambda\geq \lambda_0.
    \end{align*}
 The above estimate holds for every $a_\alpha>\trace(\Sigma_\mu),$
 $a_\alpha>\trace(\Sigma_\mu),$
 \begin{align}
\nonumber
 b_\alpha < \overline{b}_{\alpha} =\overline{b}_{\alpha}(a_\alpha)&:=2\sqrt{\frac{a_{\alpha}-\trace(\Sigma_\mu)}{\trace(\Gamma)}},~~\\
 \nonumber
 \lambda_0 = {\lambda}_0(a_{\alpha},\beta_{\alpha})&:=
 \bigg(\frac{\nAktien(a_{\alpha}-\trace(\Sigma_\mu))}{
 	2\sqrt{\trace(\Gamma)(a_{\alpha}-\trace(\Sigma_\mu))}-b_{\alpha}\trace(\Gamma)}\bigg)^2.
 \end{align}
 }
   \begin{proof}

    Using the definition of $\alphaQC$ in \eqref{alpha_Q_def}, the linearity of $\trace(\cdot)$ and that $q$
    and $\Sigma_R$ and therefore  $\Sigma_R^{-1}$ are symmetric positive
    definite, and that $\revspeed$ is positive definite we find
    \begin{align}
    \trace\big(\alphaQC(q)\big)&=\trace\Big(\Sigma_{\drift}-\revspeed
    q-q\revspeed^{\top}
    -q\Sigma_R^{-1}q-\lambda q(\Gamma+q)^{-1}q\Big)
    \leq \trace\big(\alphaQCbar(q)\big),
    \label{bound_trace_alpha}\\
    \nonumber
    \text{where}\quad \alphaQCbar(q)&:=\Sigma_{\drift}-\lambda
    q(\Gamma+q)^{-1}q.
    \end{align}
    The inequality follows from properties of
    symmetric positive definite matrices, see \eqref{trace-symmetric},  \eqref{trace-product} and
    \eqref{trace-quadrat}  from which we
    deduce
    \begin{align}
    \trace(\revspeed
    q+q\revspeed^{\top})&=\trace((\revspeed+\revspeed^{\top})q)\geq\varrho_{\min}(\revspeed+\revspeed^{\top})\trace(q)\geq
    0, \nonumber\\
    \trace(q\Sigma_R^{-1}q)&=\trace(q^2\Sigma_R^{-1})\geq\frac{\trace(q^2)}{\trace(\Sigma_R)}\geq
    \frac{\frac{1}{\nAktien}\trace^2(q)}{\trace(\Sigma_R)}
    \geq
    0.\nonumber
    \end{align}
    Here, $\varrho_{\min}(\cdot)$ denotes the
    the smallest eigenvalue of a positive definite
    symmetric matrix, which are all positive. Note that since $\revspeed$ is positive definite  $\revspeed+\revspeed^{\top}$ is symmetric and positive definite.  Further, $q^2$ is symmetric and positive semidefinite and  according to property \ref{Ainv_symm_posdefinit} of Lemma \ref{properties_Gram} $\Sigma_R^{-1}$ is symmetric and positive definite. \\
    Inequality \eqref{bound_trace_alpha} implies that it suffices to
    prove the claim for $\alphaQCbar$, i.e.,
    \begin{align}
    \label{restriction}
    \trace(\alphaQCbar(q))\leq
    a_{\alpha}-\sqrt{\lambda}b_{\alpha}\trace(q)\quad\text{for}\quad
    \lambda\geq \lambda_0.
    \end{align}
    For the proof of \eqref{restriction} we set
    $\varepsilon=\frac{1}{\sqrt{\lambda}}$, $q=\varepsilon z$,
    $a_{\drift}=\trace\big({\Sigma_{\drift}}\big)$ and
    consider the function $H^{\varepsilon}:
    \mathbb{R}^{\nAktien\times\nAktien}\rightarrow \mathbb{R}$ with
    \begin{align}
    H^{\varepsilon}(z)&:=-\trace(\overline{\alpha}^{1/\varepsilon^2}(\varepsilon
    z))+a_{\alpha}-\frac{1}{\varepsilon}b_{\alpha}\trace(\varepsilon
    z)\nonumber\\
    &=\trace(z(\Gamma+\varepsilon
    z)^{-1}z)-b_{\alpha}\trace(z)+a_{\alpha}-a_{\drift}
    \label{function_H}
    \end{align}
  for   $a_{\alpha},    b_{\alpha}, \varepsilon_0>0$ and symmetric and positive semidefinite matrices $z$.
    Below we show that there exist positive constants $a_{\alpha},
    b_{\alpha}, \varepsilon_0$ such that for all $z$ it holds

    \begin{align}
    \label{Positve_H}
    H^{\varepsilon}(z)\geq 0\quad\text{for}\quad \varepsilon\leq
    \varepsilon_0.
    \end{align}
    That inequality implies for
    $z=\frac{1}{\varepsilon}q=\sqrt{\lambda}q$
    \begin{align}
    \nonumber
    0\leq H^{\varepsilon}(z)=H^{\varepsilon}(\sqrt{\lambda}q)=-\trace(\alphaQCbar(q))-\sqrt{\lambda}b_{\alpha}\trace(q)
    +a_{\alpha},
    \end{align}
    and \eqref{bound_trace_alpha}  yields for
    $\lambda\geq\lambda_0=(\frac{1}{\varepsilon_0})^2$
    \begin{align}
    \trace(\alphaQC(q))\leq\trace(\alphaQCbar(q))\leq a_{\alpha}-\sqrt{\lambda}b_{\alpha}\trace(q),
    \nonumber
    \end{align}
    which proves the assertion.\\
    In the remainder of the proof we show inequality \eqref{Positve_H}.   The matrices  $z$ and $\Gamma$
    are symmetric,  $z$ is  positive semidefinite and $\Gamma$ is  strictly positive definite. Then $\Gamma +\eps z$ is   strictly positive definite and according to properties 1.~and
    3.~of Lemma \ref{properties_Gram}, the matrix
    $(\Gamma+\varepsilon z)^{-1}$ is symmetric and  strictly  positive definite. Further, $z^2$ is symmetric and positive semidefinite. Inequality  \eqref{trace-product} implies
    $\trace(AB)\geq\trace(B)/\trace\big(A^{-1}\big)$
    and with $A=(\Gamma+\varepsilon z)^{-1}$ and $B=z^2$ we find 
    \begin{align*}
    \trace(z(\Gamma+\varepsilon
    z)^{-1}z)=\trace(z^2(\Gamma+\varepsilon
    z)^{-1})\geq\frac{\trace(z^2)}{\trace(\Gamma+\varepsilon
        z)}=\frac{\trace(z^2)}{\trace(\Gamma)+\varepsilon
        \trace(z)}.
    \end{align*}
     Inequality \eqref{trace-quadrat}
    yields $\trace(z^2)\geq
    \frac{1}{\nAktien}\trace^2(z)$, and hence we obtain
    \begin{align}
    \label{H1}
    \trace(z(\Gamma+\varepsilon
    z)^{-1}z)\geq\frac{\frac{1}{\nAktien}\trace^2(z)}{\trace(\Gamma)+\varepsilon
        \trace(z)}=\frac{\trace^2(z)}{\gammah+\varepsilon\nAktien
        \trace(z)},
    \end{align}
    where $\gammah=\nAktien\trace(\Gamma)$. Set
    $x=\trace(z)\geq 0$  and $g(x)=\frac{x^2}{\gammah+\varepsilon\nAktien
        x}-b_{\alpha}x+a_{\alpha}-a_{\drift}$ then \eqref{H1} implies
    \begin{align}
    H^{\varepsilon}(z)\geq    g(x). \nonumber
    \end{align}
    Now it remains to choose constants $a_{\alpha}$, $b_{\alpha}$,
    $\varepsilon_0>0$ such that
    \begin{align}
    \label{H_und_f}
    g(x)\geq 0\quad\text{for all}\quad x\geq
    0\quad\text{and}\quad \varepsilon\leq \varepsilon_0.
    \end{align}
    Let $a_{\alpha}>a_{\drift}=\trace(\Sigma_{\drift})$. Then
    $\overline{a}:=g(0)=a_{\alpha}-a_{\drift}>0$. Since
    $\gammah+\varepsilon\nAktien x>0$ the inequality $g(x)\geq 0$ is
    equivalent to
    \begin{align}
    0\leq f(x):=(\gammah+\varepsilon\nAktien
    x)g(x)=A^{\varepsilon}x^2 +B^{\varepsilon}x+C, \nonumber
    \end{align}
    where $A^{\varepsilon}=1-\varepsilon\nAktien b_{\alpha}$,
    $B^{\varepsilon}=\varepsilon\nAktien\overline{a}-b_{\alpha}\gammah$,
    $C=\gammah\overline{a}$. Let $\varepsilon>0$ be chosen so that
    $A^{\varepsilon}>0$, then
    \[
    f(x)=
    A^{\varepsilon}\Big(x^2 + \frac{B^{\varepsilon}}{A^{\varepsilon}}x+ \frac{C}{A^{\varepsilon}}\Big)=A^{\varepsilon}    \Big(x-\frac{K^{\varepsilon}}{2}\Big)^2+D^{\varepsilon}
    \]

    with
    \[
    K^{\varepsilon}=\frac{B^{\varepsilon}}{A^{\varepsilon}}\quad
    \text{and}\quad
    D^{\varepsilon}:=C-\frac{1}{4}\frac{(B^{\varepsilon})^2}{A^{\varepsilon}}=\frac{4CA^{\varepsilon}-(B^{\varepsilon})^2}{4A^{\varepsilon}}.
    \]
    We choose  $a_{\alpha}>a_{\drift}$, i.e.,
    $\overline{a}=a_{\alpha}-a_{\drift}>0$, then we have $f(x)\geq 0$ if
    $D^{\varepsilon}\geq 0$ or equivalently if
    $$P(\varepsilon):=4CA^{\varepsilon}-(B^{\varepsilon})^2 \geq 0.$$
      $P(\varepsilon)$ is a quadratic function and it holds $P(\varepsilon)= 4\gammah \overline a -(\varepsilon d\overline a +b_\alpha \gammah)^2$, hence $P(0)=4\gammah \overline a-b_\alpha^2 \gammah^2$ and $P$ is decreasing for $\varepsilon>0$. Thus we have to require $P(0)>0$ which gives
    \[
    0< b_{\alpha}\leq
    \overline{b}_{\alpha}=\overline{b}_{\alpha}(a_{\alpha})=2\sqrt{\frac{a_{\alpha}-a_{\drift}}{\gammah}}.
    \]
    Then $P(\eps)\ge 0$ for $\eps\in(0,\eps_0] $ where $\eps_0$ is the positive zero of $P$ given by
    \[
    \varepsilon_0=\varepsilon_0(a_{\alpha},\beta_{\alpha})=\frac{1}{\nAktien(a_{\alpha}-a_{\drift})}\Big(2\sqrt{\gammah(a_{\alpha}-a_{\drift})}-b_{\alpha}\gammah\Big).
    \]
    It is not difficult to check that for
    $\varepsilon<\varepsilon_0$
    it holds
    $$A^{\varepsilon}=1-\varepsilon\nAktien b_{\alpha}>\Big(1-b_{\alpha}\sqrt{\frac{\gammah}{a_{\alpha}-a_{\drift}}}\Big)^2\ge 0.$$    
    Note that for $b_\alpha=\overline{b}_\alpha$ it holds $\eps_0=0$ which is not feasible.
    Hence for $a_{\alpha}>a_{\drift}$,
    $b_{\alpha}\in(0,\overline{b}_{\alpha}(a_{\alpha}))$ and for
    $\varepsilon\leq \varepsilon_0=\varepsilon_0(a_{\alpha},b_{\alpha})$  or equivalently $\lambda \ge \lambda_0=1/\varepsilon_0^2 $
    it holds \eqref{H_und_f} and therefore $H^{\varepsilon}(z)\geq 0$  under the conditions given in  \eqref{constants_ablambda}. This completes the proof.
       \end{proof}

\subsection{Proof of Lemma \ref{alpha_Properties_cont}}
\label{proof_lemma_alphaQ_cont}
For the convenience of the reader we recall the statement of Lemma \ref{alpha_Properties_cont}:\\
{\itshape
	For the function $\alphaQD$ given
	in \eqref{alpha_Q_def_cont} there exist constants
	$a_{\alpha},b_{\alpha}>0$ independent of $\lambda$  such that  for all symmetric and positive
	semidefinite $q\in \R^{d\times d}$
	\begin{align}
	\trace\big(\alphaQD(q)\big)\leq 
	a_{\alpha}-\sqrt{\lambda}\, b_{\alpha} \trace(q),\quad
	\text{for}\quad \lambda>0. 
	\nonumber 
	\end{align}
	The above estimate holds for 
$
	a_\alpha=\trace(\Sigma_\mu)+(d \trace(\overline\Sigma_\myzeta)\mynu)^{-1} \quad \text{and} \quad 
	b_\alpha= 2(d \trace(\overline\Sigma_\myzeta)\sqrt{\mynu})^{-1} 
$
	and every $\mynu>0$.
}
\begin{proof}
	
	Using the definition of $\alphaQD$ in \eqref{alpha_Q_def_cont} and  the linearity of $\trace(\cdot)$ 
	we find
	\begin{align}\label{trace_alphaQD}
	\trace\big(\alphaQD(q)\big)&=\trace\big(\Sigma_{\drift}\big) -\trace\big(\revspeed
	q+q\revspeed^{\top}\big)
	-\trace\big(q{\neu (\Sigma_R^{-1} + \lambda \overline \Sigma_\myzeta^{-1})}q\big).
	\end{align}
	For the second term on the r.h.s. \eqref{trace-symmetric} implies
$	\trace(\revspeed
	q+q\revspeed^{\top})=\trace((\revspeed+\revspeed^{\top})q)\geq \beta \trace(q) 
$
	where $\beta:=\varrho_{\min}(\revspeed+\revspeed^{\top})>0$ is the smallest eigenvalue of $\revspeed+\revspeed^{\top}$. That matrix  is  symmetric and positive definite since   $\revspeed$ is positive definite.
Using \eqref{trace-product} and 	\eqref{trace-quadrat} 
	 we deduce  
	\begin{align*} 
{\neu \trace(q(\Sigma_R^{-1}+\lambda \overline\Sigma_\myzeta^{-1})q)}
	 &\ge \lambda \trace(q \overline\Sigma_\myzeta^{-1}q) =  \lambda \trace(q^2 \overline\Sigma_\myzeta^{-1})\\
	&\geq \lambda\frac{\trace(q^2)}{\trace(\overline\Sigma_\myzeta))}
	\geq 	\lambda\frac{\frac{1}{\nAktien}\trace^2(q)}{\trace(\overline\Sigma_\myzeta))} =\lambda \gammah \trace^2(q)
	\end{align*}
	where $\gammah:= (\nAktien\trace(\overline\Sigma_\myzeta))^{-1}>0$.
	Substituting the above  estimates into \eqref{trace_alphaQD} we obtain 
	\begin{align}
	\label{trace_alphaQD2}
	\trace\big(\alphaQD(q)\big) \le f(\trace(q))\quad \text{with}\quad
	f(x):= a_{\drift} -\beta x -\lambda \gammah x^2, ~~x\ge 0,
	\end{align}
	where we set $a_{\drift}=\trace({\Sigma_{\drift}})$. 
	The quadratic function $f$ is strictly concave, thus for any $x_0\ge 0$ it holds $f(x)\le f(x_0)+f^\prime(x_0)(x-x_0)$. Choosing $x_0=1/\sqrt{\lambda \mynu}$ for some $\mynu>0$ it follows
	$$f(x) \le a_{\drift} + \frac{\gammah}{\mynu} - \sqrt{\lambda} \;\frac{2\gammah}{\sqrt{\mynu}}x 
	= a_\alpha -\sqrt{\lambda} b_\alpha x$$
	where we used the definition of $a_\alpha, b_\alpha$ in \eqref{constants_ablambda_cont}.
		Substituting this estimate into \eqref{trace_alphaQD2} proves the claim.
\end{proof}

\end{appendix}

\bibliographystyle{plain}

\begin{thebibliography}{99}
\bibitem[Bjoerk et al.(2010)]{Bjoerk et al (2010)} Bj\"ork, T., Davis, M. H. A. and Land\'en, C.: Optimal
investment with partial information, {\it Mathematical Methods of
    Operations Research} {\bf 71} (2010), 371--399.

\bibitem[Black and Litterman (1992)]{Black_Litterman (1992)}
Black, F. and Litterman, R. (1992): Global portfolio optimization.
{\it Financial Analysts Journal} 48(5), 28-43.


\bibitem[Bremaud (1981)]{Bremaud}
Bremaud, P. (1981) : {\it Point Processes and Queues: Martingale
Dynamics},  Springer, New York.


\bibitem[Brendle (2006)]{Brendle2006}
Brendle, S. (2006): Portfolio selection  under incomplete information. {\it
Stochastic Processes and their Applications}, 116(5), 701-723.

{\neu 
\bibitem{cont_tankov_2004}
{Cont, R. and Tankov, P.} (2004):
{\it Financial Modelling with Jump Processes}.
Chapman and Hall/CRC.
}


\bibitem[Davis and Lleo (2013)]{Davis and Lleo (2013)} Davis, M. and Lleo, S. (2013):
Black-Litterman in continous time: the case for filtering. {\it
Quantitative Finance Letters}, 1,30-35

\bibitem[Davis and Lleo (2018)]{Davis and Lleo (2018)} Davis, M. and Lleo, S. (2018):
Debiased Expert Opinions in Continuous Time Asset Allocation.
Available at SSRN: https://ssrn.com/abstract=2663650 or
http://dx.doi.org/10.2139/ssrn.2663650

\bibitem[Elliott et al.(1994)]{Elliott et al. (1994)}
Elliott, R.J.,  Aggoun, L.  and Moore, J.B. (1994): {\it Hidden Markov
Models}, Springer, New York.

\bibitem[Frey et al.(2012)]{Frey et al. (2012)}
Frey, R., Gabih, A. and Wunderlich, R. (2012): Portfolio
optimization under partial information with expert opinions.  {\it
International Journal of Theoretical and Applied Finance}, 15,
No.~1.

\bibitem[Frey et al.(2014)]{Frey-Wunderlich-2014}
Frey, R., Gabih, A. and Wunderlich, R. (2014):  Portfolio
Optimization under Partial Information with Expert Opinions: Dynamic
Programming Approach. {\it Communications on Stochastic Analysis}
Vol. 8, No. 1 (2014) 49-71.


\bibitem[Gabih et al.(2014)]{Gabih et al (2014)}
Gabih, A., Kondakji, H. Sass, J. and Wunderlich, R. (2014): Expert
Opinions and Logarithmic Utility Maximization in a Market with
Gaussian Drift. {\it Communications on Stochastic Analysis} Vol. 8,
No. 1 (2014) 27-47.

\bibitem[Gabih et al.(2018)]{Gabih et al (2018-2)}
Gabih, A., Kondakji, H. and Wunderlich, R. (2020): Portfolio
Optimization in a Market with Gaussian Drift and Randomly Arriving
Expert Opinions. In preparation.

\bibitem[Horn and Johnson (2012)]{Horn_Johnson} Horn, R.A. and Johnson, C.R. (2012): {\it Matrix Analysis}, 2nd edn, Cambridge University Press, New York.

{\neu 
\bibitem[Kondakji (2019)]{Kondakji (2019)}
Kondakji, H. (2019): Optimal portfolios for partially informed
investors in a financial market with Gaussian drift and expert
opinions (in German). {\it PhD Thesis}. BTU Cottbus-Senftenberg. Available
at \url{https://opus4.kobv.de/opus4-btu/frontdoor/deliver/index/docId/4736/file/Kondakji_Hakam.pdf}.
}

\bibitem[Lakner (1998)]{Lakner (1998)} Lakner, P.
(1998): Optimal trading strategy for an investor: the case of
partial information. {\it Stochastic Processes and their
Applications} 76, 77-97.

\bibitem[Liptser and Shiryaev (2001)]{Liptser-Shiryaev} Liptser,
R.S. and Shiryaev  A.N. (2001): {\em Statistics of Random Processes:
General Theory}, 2nd edn, Springer, New York.

\bibitem[Putschoegl and Sass (2008)]{Putschoegl and Sass (2008)}
Putsch\"ogl, W. and Sass, J.: Optimal consumption and investment
under partial information, {\it Decisions
    in Economics and Finance} {\bf 31} (2008), 131--170.

\bibitem[Rieder and Baeuerle (2005)]{Rieder_Baeuerle2005} Rieder, U. and B\"auerle, N. (2005): Portfolio optimization with unobservable Markov-modulated drift process. {\it Journal of Applied Probability} { 43}, 362-378.

\bibitem[Rogers (2013)]{Rogers (2013)}
Rogers, L.C.G. (2013): {\it Optimal Investment.} SpringerBriefs in
Quantitative Finance. Springer, Berlin-Heidelberg.


\bibitem[Sass and Haussmann (2004)]{Sass and Haussmann (2004)}
Sass, J. and Haussmann, U.G (2004): Optimizing the terminal wealth
under partial information: The drift process as a continuous time
Markov chain. {\it Finance and Stochastics} {8}, 553-577.

\bibitem[Sass et al.(2017)]{Sass et al (2017)}
Sass, J.,  Westphal, D.  and Wunderlich, R. (2017): Expert Opinions
and Logarithmic Utility Maximization for Multivariate Stock Returns
with Gaussian Drift. {\it International Journal of Theoretical and
Applied Finance} Vol. 20, No.~4, 1-41.

\bibitem[Sass et al.(2019)]{Sass et al (2018)}
Sass, J.,  Westphal, D.  and Wunderlich, R. (2019): Diffusion
Aproximations for Expert Opinions in a Financial Market with
Gaussian Drift. arXiv:1807.00568 [q-fin.PM]

\bibitem[Schoettle et al.(2010)]{Schoettle et al. (2010)}
Sch\"ottle, K., Werner, R. and Zagst, R. (2010): Comparison and
robustification of Bayes and Black-Litterman models. {\it
    Mathematical Methods of Operations Research}  71, 453--475.

\bibitem[Wang et al.(1986)]{wang_kuo_hsu_1986}
Wang, S.-D.,  Kuo, T.-S. and  Hsu, C.-F. (1986):
\newblock Trace bounds on the solution of the algebraic matrix {Riccati} and
{Lyapunov} equation.
\newblock {\it IEEE Transactions on Automatic Control}, 31, 654--656.

{\neu 
\bibitem{westphal_2019}
{Westphal, D.} (2019).
\newblock Model uncertainty and expert opinions in continuous-time financial
markets.
\newblock {\it PhD thesis}.
\newblock Technische Universit\"{a}t Kaiserslautern.
\newblock Available at \url{https://nbn-resolving.org/urn:nbn:de:hbz:386-kluedo-58414}.
}

\end{thebibliography}

\end{document}